\pdfoutput=1
\documentclass[11pt,a4paper]{article}
\usepackage[a4paper,margin=2.5cm]{geometry}
\usepackage{authblk}
\usepackage[labelfont=bf,labelsep=period]{caption}

\usepackage{amsmath,amssymb,amsthm,amsfonts}
\usepackage{txfonts}
\usepackage{float}
\usepackage{booktabs}
\usepackage{array}
\usepackage[section]{placeins}
\usepackage{algorithm}
\usepackage{algpseudocode}
\usepackage{graphicx}
\usepackage{bm}
\usepackage{cite}
\usepackage[colorlinks,linkcolor=black,citecolor=black,urlcolor=black,breaklinks]{hyperref}

\usepackage{xcolor}
\usepackage{pgfplots}
\pgfplotsset{compat=1.18}
\usepgfplotslibrary{groupplots}
\usetikzlibrary{calc}

\definecolor{figBlue}{HTML}{2A78D6}
\definecolor{figOrange}{HTML}{EB6834}
\definecolor{figGreen}{HTML}{1BAF7A}
\definecolor{figYellow}{HTML}{EDA100}
\definecolor{figPink}{HTML}{E87BA4}
\definecolor{figInk}{HTML}{0B0B0B}
\definecolor{figInkTwo}{HTML}{52514E}
\definecolor{figMuted}{HTML}{898781}
\definecolor{figGrid}{HTML}{E1E0D9}
\definecolor{figBase}{HTML}{C3C2B7}
\definecolor{figWash}{HTML}{F0EFEC}
\definecolor{rampA}{HTML}{86B6EF}
\definecolor{rampB}{HTML}{3987E5}
\definecolor{rampC}{HTML}{1C5CAB}
\definecolor{rampD}{HTML}{0D366B}
\pgfplotsset{
  paperfig/.style={scale only axis, clip=false,
    axis line style={draw=none}, tick style={draw=none},
    grid=major, major grid style={draw=figGrid, line width=0.3pt},
    tick label style={font=\scriptsize, text=figInkTwo},
    label style={font=\footnotesize, text=figInkTwo},
    title style={font=\footnotesize\bfseries, text=figInk, anchor=south west, at={(0,1)}, inner xsep=0pt}},
  figline/.style={line width=0.9pt, line join=round, line cap=round},
  figdot/.style={mark=*, mark size=1.8pt, mark options={draw=white, line width=0.3pt}}}
\tikzset{figlabel/.style={font=\scriptsize, text=figInkTwo, inner sep=1pt}}

\emergencystretch=\maxdimen
\allowdisplaybreaks

\numberwithin{equation}{section}
\newtheorem{theorem}{Theorem}[section]
\newtheorem{lemma}[theorem]{Lemma}
\newtheorem{proposition}[theorem]{Proposition}

\theoremstyle{definition}
\newtheorem{definition}[theorem]{Definition}
\newtheorem{remark}[theorem]{Remark}
\newtheorem{example}[theorem]{Example}

\newcommand{\R}{\mathbb{R}}
\newcommand{\norm}[1]{\lVert #1\rVert}
\newcommand{\ip}[2]{\langle #1,#2\rangle}
\newcommand{\minShareFD}{0.058}
\newcommand{\minShareFDsector}{finance}
\newcommand{\minShareFDyear}{2014}
\newcommand{\nSectors}{19}
\newcommand{\outputTrn}{173.3}
\newcommand{\finalTrn}{113.0}
\newcommand{\gdpObsTrn}{75.70}

\newcommand{\miningShareOutput}{27.6}
\newcommand{\miningExportShare}{96}
\newcommand{\specRadius}{0.389}
\newcommand{\maxBalErr}{0.4}
\newcommand{\specRadiusMax}{0.466}
\newcommand{\colSumMax}{0.639}
\newcommand{\rhoMax}{1.393}
\newcommand{\rhoMaxSector}{Education}
\newcommand{\rhoMaxShare}{97}
\newcommand{\rhoSecond}{Arts}
\newcommand{\rhoThird}{Health}
\newcommand{\rhoFourth}{Mining}
\newcommand{\rhoMin}{0.742}
\newcommand{\rhoMinSector}{Electricity}
\newcommand{\selfUseMin}{31}
\newcommand{\rasmussenKey}{electricity, transport, professional and administrative}
\newcommand{\fstarTrn}{80.81}
\newcommand{\effRatio}{0.937}
\newcommand{\gdpGap}{6.8}
\newcommand{\dualImports}{0.821}
\newcommand{\dualWages}{1.129}
\newcommand{\nYup}{11}
\newcommand{\nYlow}{6}

\newcommand{\xNearSector}{public administration}
\newcommand{\xNearPct}{9.997}
\newcommand{\kappaPct}{10}
\newcommand{\thetaPct}{5}
\newcommand{\rAtLP}{0.0000}

\newcommand{\epsObs}{6.32}
\newcommand{\centreFreeList}{transport, public administration and education}
\newcommand{\centreFreeMax}{2.19}
\newcommand{\etaBase}{6.22}
\newcommand{\epsPct}{0.5}
\newcommand{\deltaPct}{1}
\newcommand{\rBase}{0.673}
\newcommand{\boxBase}{0.109}
\newcommand{\rCube}{0.33}
\newcommand{\NDim}{38}
\newcommand{\gdpCentre}{99.73}
\newcommand{\distCentreLP}{0.039}
\newcommand{\distCentreObs}{0.495}
\newcommand{\rAbsBn}{5.0}
\newcommand{\absMinPct}{0.002}
\newcommand{\absMaxPct}{0.4}
\newcommand{\absMinSector}{Mining}
\newcommand{\absMaxSector}{Water supply}
\newcommand{\yUpList}{agriculture, mining, manufacturing, trade, accommodation, ICT, finance, real estate, public administration, arts and other services}
\newcommand{\yLowList}{electricity, water supply, construction, professional, administrative and health}
\newcommand{\adjTopList}{electricity, public administration, arts, mining and health}
\newcommand{\adjMin}{0.55}
\newcommand{\adjMax}{0.78}
\newcommand{\epsObsPlateau}{6.57}
\newcommand{\rPlateau}{0.718}
\newcommand{\epsKinkBase}{0.56}
\newcommand{\kinkRatio}{0.57}
\newcommand{\rEpsZero}{0.235}
\newcommand{\slopeEps}{0.87}
\newcommand{\dualGDP}{0.248}
\newcommand{\dualTopTwoName}{surplus side of the mining balance}
\newcommand{\dualTopTwo}{0.148}
\newcommand{\effMin}{0.881}
\newcommand{\effMax}{0.971}
\newcommand{\rSensMin}{0.386}
\newcommand{\rSensMax}{0.696}
\newcommand{\rThetaFive}{0.673}
\newcommand{\rThetaTen}{0.404}
\newcommand{\activeFive}{19}
\newcommand{\activeTen}{40}

\newcommand{\xBoundsTen}{19}

\newcommand{\fstarThetaTenChg}{3.0}
\newcommand{\rOneSidedBase}{0.359}

\newcommand{\rOneSidedTheory}{0.359}
\newcommand{\mcNsim}{20,000}
\newcommand{\mcFullSphere}{0.30}

\newcommand{\mcHalfSphere}{0.64}
\newcommand{\mcSphereFour}{99.8}
\newcommand{\mcSphereFive}{96.5}
\newcommand{\mcCubeFour}{16.9}
\newcommand{\mcCubeFive}{1.9}
\newcommand{\effPanelMin}{0.932}
\newcommand{\effPanelMax}{0.937}
\newcommand{\rPanelMin}{0.661}
\newcommand{\rPanelMax}{0.701}
\newcommand{\miningExpFirst}{69}
\newcommand{\miningExpLast}{96}
\newcommand{\rhoMiningFirst}{1.326}
\newcommand{\rhoMiningLast}{1.344}
\newcommand{\toyR}{3.43}
\newcommand{\toyCx}{104.28}
\newcommand{\toyCy}{62.57}
\newcommand{\toyCxMin}{103.43}
\newcommand{\toyCxMax}{104.28}
\newcommand{\rVersion}{4.6.1}
\newcommand{\highsVersion}{1.14.0.2}
\newcommand{\quadprogVersion}{1.5.8}

\begin{document}
\setcounter{page}{1}
\title{Input--output analysis with a sphere packing approach}

\author[1]{Munkhbayar Byambadash}
\author[2,*]{Tumendelger Lkhagvasuren}
\author[3]{Enkhbat Rentsen}
\affil[1]{The University of Finance and Economics, Ulaanbaatar, Mongolia}
\affil[2]{The Graduate School, University of Finance and Economics, Ulaanbaatar, Mongolia}
\affil[3]{Institute of Mathematics and Digital Technology, Mongolian Academy of Sciences, Ulaanbaatar, Mongolia}
\affil[*]{Corresponding author. Email: \href{mailto:tumendelger.l@ufe.edu.mn}{tumendelger.l@ufe.edu.mn}}
\date{}

\maketitle

\begin{abstract}
Input--output analysis with linear programming is widely used in the planning of national economies. However, linear programming attains its optimum at a vertex of the feasible set, and a small deviation of the optimal plan in some direction violates a constraint. So far, little attention has been paid to finding sets of production plans which guarantee both the supply--demand balance of an economy and a given level of its gross domestic product (GDP), and the sphere packing approach cannot be applied directly, since the balance equations define a set with an empty interior. In this paper, we introduce the set of $(\varepsilon,\delta)$-solutions, in which GDP may be smaller than its optimal value by at most $\varepsilon$ and supply may differ from demand by at most $\delta$, and we find a sphere with the maximum radius inscribed into this set by solving a linear programming problem. We prove that the optimal radius is positive for positive tolerances, that it is non--decreasing, concave and piecewise linear with respect to the tolerances, and that it is bounded by the narrowest balance constraint, which is determined by a sales-concentration index of sectors. The approach is applied to the official input--output tables of Mongolia for 2010--2019 and 2024. The classical optimal plan has zero robustness, while the robust plan keeps \gdpCentre\% of the maximal GDP and, in a Monte Carlo simulation, tolerates independent deviations of all outputs and final demands of up to $\pm$\mcFullSphere\%. The margin of robust plans is limited by sectors with concentrated sales, mainly education and, increasingly, export-oriented mining, which indicates that a diversification of the users of mining output widens the set of robust plans.

\medskip
\noindent\textbf{Keywords:} input--output analysis; linear programming; sphere packing; Chebyshev center; near-optimal solutions; robust production planning; supply--demand balance

\noindent\textbf{Mathematics Subject Classification:} 90C05, 90C31, 91B66, 52C17
\end{abstract}

\section{Introduction}

\subsection{Background and motivation}\label{sec:background}

Input--output (IO) analysis introduced by Leontief \cite{Leontief1936} describes an economy as a system of interdependent sectors, where each sector buys intermediate inputs from other sectors and sells its output to them and to final users \cite{MillerBlair2022}. Since every sector is both a supplier and a user, a production plan of an economy must satisfy a system of balance equations: the supply of every sector must be equal to the demand for its output. Linear programming (LP) has accompanied IO analysis from its beginnings, and LP models based on IO tables are used to find production plans which maximize gross domestic product (GDP) or another objective subject to the balance equations and to capacity and resource constraints.

An LP problem attains its optimum at a vertex of the feasible set, where several capacity, demand and resource constraints are active. For a planner, this property has an unfavorable consequence: a small deviation of the optimal plan in a certain direction violates an active constraint, so that the optimal plan has no margin at all. In practice, plans are never implemented exactly, and IO data are estimates themselves. A planner therefore needs not only a good plan, but also tolerance ranges for outputs and final demands within which every plan remains acceptable, that is, keeps the economy balanced and keeps GDP close to its maximum. The purpose of this paper is to construct such plans and ranges and to explain which features of the economic structure determine them. Below, we review the related literature (Section~\ref{sec:litreview} and Table~\ref{tab:lit}), identify the research gap (Section~\ref{sec:gap}) and state our contributions (Section~\ref{sec:contrib}).

\subsection{Literature review}\label{sec:litreview}

\emph{Linear programming models based on IO tables.} The programming of interdependent activities was formulated by Wood and Dantzig \cite{WoodDantzig1949} and Dantzig \cite{Dantzig1951model}. In \cite{Koopmans1951,Koopmans1951efficient}, production has been considered as an efficient combination of activities, and the non-substitution theorem has been proved in \cite{Samuelson1951,Arrow1951}. The duality between Leontief quantities and prices was presented by Dorfman, Samuelson and Solow \cite{DSS1958}. Since then, LP models based on IO tables have been used in development planning \cite{CheneryKretschmer1956,Manne1974} and in the measurement of efficiency and productivity of national economies, where the observed economy is compared with its optimal LP frontier \cite{tenRaa2005,tenRaaMohnen2002,LuptacikBohm2010}. International trade has been analyzed in \cite{Duchin2005,DuchinLevine2011}, and waste, energy and emission problems in \cite{KondoNakamura2005,SanCristobal2010,HristuVarsakelis2010,Kang2020EST}. Multiobjective IO models describe the trade-offs between economic, energy, environmental and social objectives \cite{OliveiraAntunes2011,Carvalho2015,Oliveira2016}. The result of all these models is an optimal plan, a set of efficient plans or a system of shadow prices, and the plans are vertices or faces of the feasible set. How far such a plan can be changed without violating the balance equations and the target has not been examined in this literature.

\emph{Uncertainty of IO data.} It has long been recognized that economic observations, and IO data in particular, are not exact \cite{Morgenstern1951}. The effect of errors in IO coefficients on the solution of the IO model has been studied by stochastic analysis \cite{West1986}, Monte Carlo simulation \cite{BullardSebald1988} and interval arithmetic \cite{Jerrell1997}, and the bias of multipliers estimated from stochastic coefficients has been analyzed in \cite{Dietzenbacher2006,RuedaCantuche2013}. The important coefficients of an IO table have been identified by the analysis of coefficient change \cite{SonisHewings1992} and by the tolerable limits approach \cite{Tarancon2008,Lach2021,RamosCarvajal2024}. IO tables are often estimated or updated by biproportional methods \cite{LahrMesnard2004}, and the uncertainty of multiregional IO databases and of the carbon footprints computed from them has been assessed by Monte Carlo simulation and by the comparison of databases \cite{Lenzen2010,Wilting2012,MoranWood2014,Rodrigues2018}. This literature measures how errors in the data affect the results of IO analysis, but it does not construct plans which remain acceptable when the plan itself deviates.

\emph{Sensitivity analysis and robust optimization.} In LP, the stability of an optimal solution with respect to changes of the data is studied by sensitivity and parametric analysis \cite{WardWendell1990}. The tolerance approach determines the largest simultaneous and independent changes of the objective or right-hand side coefficients under which the optimal basis remains optimal \cite{Wendell1985,Filippi2005}, and a global version of this approach has been proposed in \cite{Borgonovo2018}. Robust optimization finds decisions which remain feasible for all realizations of the data in a given uncertainty set \cite{Soyster1973,BenTalNemirovski2000,BertsimasSim2004,BenTalElGhaouiNemirovski2009}; see \cite{BertsimasBrownCaramanis2011,Gabrel2014} for surveys. A robust IO--LP model for the electricity mix has been proposed in \cite{Kang2020EJOR}. Both approaches protect an optimal basis or a decision against perturbations of the data rather than describe how far the decision itself may deviate.

\emph{Near-optimal solution spaces.} Modeling to generate alternatives (MGA) searches for maximally different plans whose objective value is within a given slack of the optimum \cite{Brill1982,DeCarolis2011}. Near-optimal spaces have become important in energy system planning, since cost-optimal solutions often differ from the transitions observed in reality \cite{Trutnevyte2016}, and methods which explore the whole near-optimal space have been developed in \cite{NeumannBrown2021,Pedersen2021}. In IO analysis, near-optimal trade-offs between GDP and carbon dioxide emissions have been studied in \cite{Chang2024}. These methods describe the near-optimal set by alternative plans or by its extreme points, but they do not give a plan together with a guaranteed neighborhood in which every plan is both near-optimal and balanced.

\emph{Disruptions and resilience in IO models.} The inoperability IO model describes the propagation of a loss of functionality through interdependent infrastructures and sectors \cite{HaimesJiang2001,SantosHaimes2004}, adaptive regional and multiregional IO models describe the economic consequences of disasters \cite{Hallegatte2008,KoksThissen2016}, and an IO model with simultaneous supply and demand constraints has been used to assess the effects of the Covid-19 pandemic \cite{PichlerFarmer2022}; see \cite{OkuyamaSantos2014,GalbuseraGiannopoulos2018} for surveys. These models compute the consequences of a given shock and thus examine the boundary of the feasible set of an economy; the size of the interior of the set of acceptable plans is not measured.

\emph{Chebyshev centers, design centering and sphere packing.} The center of the largest ball inscribed into a convex set is its Chebyshev center, which is found by LP when the set is a polyhedron \cite{BoydVandenberghe2004,Balashov2022}; Chebyshev centers are also used in estimation with bounded errors \cite{EldarBeckTeboulle2008}. Related centers of polyhedra are the analytic center \cite{Sonnevend1986} and the center of the maximal inscribed ellipsoid \cite{KhachiyanTodd1993}. In engineering design, the design centering problem asks for a nominal design which keeps the largest possible set of perturbations within a region of acceptability \cite{VidigalDirector1982,NguyenStrodiot1992,Stein2006,HarwoodBarton2017}, and the maximal cube inscribed into a given domain has been constructed in \cite{Ashchepkov1980}. The sphere packing problem is to pack non-overlapping spheres with the maximum volume into a convex set. It has been shown to be equivalent to a convex maximization problem \cite{Enkhbat2020}; Malfatti's problem, a particular case of the circle packing problem, has been solved by a global optimization method \cite{Enkhbat2016}; and recent advances have been surveyed in \cite{Enkhbat2022MOTOR}. In the applications of the sphere packing approach, the center of the inscribed sphere gives a robust decision, and its radius shows how far the decision can be perturbed. This idea has been used in break-even and profitability analysis \cite{EnkhbatTungalag2023}, in financial management \cite{Chuluunbaatar2023}, in the government budget \cite{Ganlkhagva2023}, in taxation \cite{Badam2021} and in the mining industry \cite{Enkhbat2021NACO}, and it has been extended to nonconvex feasible--profit sets with a variable price in an oligopolistic market in \cite{Byambadash2026}. All these methods require a feasible set with a nonempty interior, which is defined by the inequalities of a single decision maker.

\begin{table}[H]
\caption{Positioning of the proposed approach in the related literature.}
\label{tab:lit}
\centering\footnotesize
\setlength{\tabcolsep}{3pt}
\begin{tabular}{>{\raggedright\arraybackslash}p{0.19\textwidth}>{\raggedright\arraybackslash}p{0.16\textwidth}>{\raggedright\arraybackslash}p{0.25\textwidth}>{\raggedright\arraybackslash}p{0.1\textwidth}>{\raggedright\arraybackslash}p{0.11\textwidth}>{\raggedright\arraybackslash}p{0.11\textwidth}}
\toprule
Stream & Studies & Object of the analysis & Balance equations & Guaranteed ranges of a plan & Link to IO structure\\
\midrule
LP--IO planning and efficiency & \cite{CheneryKretschmer1956,Manne1974,tenRaaMohnen2002,LuptacikBohm2010,Oliveira2016} & optimal or efficient plans, shadow prices & yes & no & shadow prices\\
Uncertainty of IO data & \cite{West1986,Dietzenbacher2006,Tarancon2008,Lenzen2010,Rodrigues2018} & effect of data errors on multipliers and footprints & yes & no & important coefficients\\
Sensitivity analysis, tolerance approach & \cite{Wendell1985,WardWendell1990,Filippi2005,Borgonovo2018} & stability of the optimal basis under changes of the data & yes & no & no\\
Robust optimization & \cite{BertsimasSim2004,BenTalElGhaouiNemirovski2009,Gabrel2014,Kang2020EJOR} & feasibility of a decision for all data in an uncertainty set & yes & no$^{a}$ & no\\
Near-optimal spaces, MGA & \cite{Brill1982,DeCarolis2011,NeumannBrown2021,Chang2024} & alternative near-optimal plans & yes & no & no\\
Disruption and inoperability IO models & \cite{HaimesJiang2001,Hallegatte2008,PichlerFarmer2022,GalbuseraGiannopoulos2018} & consequences of a given shock & yes & no & propagation of shocks\\
Chebyshev center, design centering, sphere packing & \cite{BoydVandenberghe2004,HarwoodBarton2017,Enkhbat2022MOTOR,EnkhbatTungalag2023} & largest ball in a set defined by inequalities & no$^{b}$ & yes & no\\
This paper & -- & largest ball in the $(\varepsilon,\delta)$-set of an LP--IO problem & yes & yes & sales concentration $\rho_i$, criticality weights\\
\bottomrule
\end{tabular}
\\[1mm]
\parbox[b]{0.95\linewidth}{\footnotesize Notes: $^{a}$ robust optimization guarantees feasibility for perturbations of the data, not for deviations of the plan; $^{b}$ the inscribed ball degenerates to a point when the constraints include equations (Proposition~\ref{prop:empty}).}
\end{table}

\subsection{Research gap}\label{sec:gap}

The reviewed streams answer different questions (Table~\ref{tab:lit}). LP--IO models find optimal or efficient plans; the literature on the uncertainty of IO data measures the effect of data errors on the results; sensitivity analysis and robust optimization protect an optimal basis or the feasibility of a decision against perturbations of the data; near-optimal space methods describe alternative plans; disruption models trace the consequences of a given shock; and the Chebyshev center and sphere packing approach measure how far a decision can move inside a set defined by inequalities. Taking these streams together, we identify the following three gaps.

\begin{enumerate}
\item[(G1)] \emph{The room for maneuver of a production plan is not measured.} None of the approaches characterizes a set of production plans in which every plan satisfies the supply--demand balances and a GDP target simultaneously, together with tolerance ranges which are guaranteed when all outputs and final demands deviate at the same time. The optimal plan of an LP--IO model has no such margin at all, and the near-optimal space methods describe alternative plans but not a guaranteed neighborhood of a plan.
\item[(G2)] \emph{The sphere packing approach does not apply to balance equations.} The Chebyshev center, design centering and sphere packing methods require a feasible set with a nonempty interior. The feasible set of a system of balance equations has an empty interior, and the inscribed sphere degenerates to a point. A relaxation of the balance equations which makes the approach applicable, and the properties of the resulting radius as a function of the tolerances, have not been studied.
\item[(G3)] \emph{The link between robustness and the structure of the economy is unknown.} It is not known which features of an IO table limit the margin of a balanced near-optimal plan, and how the sectors which limit this margin are related to the key sectors of the classical linkage analysis \cite{Rasmussen1956}. For Mongolia, an economy dominated by mining exports, recent IO and growth studies have identified key sectors, diversification priorities and the effects of the mining boom \cite{Dagys2025,Dagys2020,Lkhagva2019}, but the robustness of production plans of the economy has not been examined.
\end{enumerate}

\subsection{Contributions and organization}\label{sec:contrib}

In this paper, we relax the classical problem $\max\ip{c}{x}$ subject to $Ax=b$ in two directions: the objective function may be smaller than its optimal value $f^*$ by at most $\varepsilon$, and the balance may be violated by at most $\delta$ on both sides, that is, supply $S$ may differ from demand $D$ by at most $\delta$, $|S-D|\le\delta$. Then we find a sphere with the maximum radius inscribed into the obtained $(\varepsilon,\delta)$-set. The $\varepsilon$-relaxation connects the approach with the near-optimal spaces, while the $\delta$-relaxation, which is interpreted as the admissible change in inventories, converts the balance equations into inequalities; for positive tolerances $\varepsilon$ and $\delta$, the set of acceptable plans then has a nonempty interior. The main contributions of the paper are the following.
\begin{itemize}
\item \emph{Theory (gap G2).} We extend the sphere packing approach to linear programming problems with balance equations by introducing $(\varepsilon,\delta)$-solutions. We show that the $(\varepsilon,\delta)$-sphere packing problem reduces to linear programming, and we prove that the optimal radius is positive for positive tolerances, that it is non--decreasing, concave and piecewise linear with respect to the tolerances, that it is bounded by the narrowest balance constraint, and that every classical optimal solution has zero robustness whenever the objective function is not a linear combination of the balance equations, which is always the case in the IO model.
\item \emph{Method (gaps G1 and G3).} For the IO model, we write the sphere packing problem explicitly as a linear programming problem in relative coordinates, which gives a robust plan together with guaranteed tolerance ranges for all outputs and final demands. We obtain the bound $\delta/\rho_{\max}$ of the radius in terms of a sales-concentration index $\rho_i$ of sectors, the threshold of the GDP tolerance at which this bound is attained, and the criticality weights of the constraints, which relate the robustness of production plans to the structure of the economy.
\item \emph{Application (gaps G1 and G3).} Using the official IO tables of Mongolia for 2010--2019 and 2024, we show that the classical optimal plan has zero robustness, that a robust plan costs less than 0.3\% of the maximal GDP, and that the robustness is limited by sectors with concentrated sales, including the export-oriented mining sector, which differ from the key sectors of the linkage analysis.
\end{itemize}

The paper is organized as follows. The sphere packing problem is formulated in Section~\ref{sec:sp}. Section~\ref{sec:epsdelta} is devoted to the $(\varepsilon,\delta)$-sphere packing problem and its properties. Section~\ref{sec:io} applies the sphere packing approach to IO analysis, and Section~\ref{sec:numerical} presents the numerical results for the Mongolian economy. The results are discussed in relation to the research gaps in Section~\ref{sec:discussion}, and Section~\ref{sec:conclusion} concludes the paper.

\section{Formulation of the sphere packing problem}\label{sec:sp}

The main methodology of our research is the sphere packing approach, which deals with packing non-overlapping spheres with the maximum volume into a given set \cite{Enkhbat2020,Enkhbat2022MOTOR}. Let $B(x^0,r)$ be a ball with a center $x^0\in\R^n$ and radius $r\in\R$, $r\ge0$:
\begin{equation}\label{eq:ball}
B(x^0,r)=\{x\in\R^n\mid \norm{x-x^0}\le r\},
\end{equation}
here $\ip{\cdot}{\cdot}$ denotes the scalar product of two vectors in $\R^n$, and $\norm{\cdot}$ is the Euclidean norm. As is usual in the sphere packing approach, we also call the ball $B(x^0,r)$ a sphere. The $n$-dimensional volume of the Euclidean ball $B(x^0,r)$ is \cite{Enkhbat2020}
\begin{equation}\label{eq:vol}
V(B)=\frac{\pi^{n/2}}{\Gamma\bigl(\frac n2+1\bigr)}\,r^n,
\end{equation}
where $\Gamma$ is Euler's gamma function. Define the set $D$ as follows:
\begin{equation}\label{eq:Dconv}
D=\{x\in\R^n\mid g_i(x)\le0,\ i=\overline{1,m}\},
\end{equation}
where $g_i:\R^n\to\R$, $i=\overline{1,m}$, are convex functions. Assume that $D$ is a compact set and $\operatorname{int}D\ne\emptyset$. Clearly, $D$ is a convex set in $\R^n$. Let us introduce the functions $\varphi_i$ for $x\in\R^n$ and $r\ge0$:
\begin{equation}\label{eq:phi}
\varphi_i(x,r)=\max_{\norm{h}\le1}g_i(x+rh),\qquad i=\overline{1,m}.
\end{equation}
The maximum in \eqref{eq:phi} exists, since a convex function on $\R^n$ is continuous and the unit ball is compact. The following statement plays an important role in our further analysis.

\begin{lemma}[{[\citen{Enkhbat2020,Enkhbat2022MOTOR}]}]\label{lem:ball}
$B(x^0,r)\subset D$ if and only if
\begin{equation}\label{eq:lemcond}
\max_{1\le i\le m}\varphi_i(x^0,r)\le0.
\end{equation}
\end{lemma}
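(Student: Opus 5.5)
The plan is to prove both directions directly from the definitions \eqref{eq:ball}, \eqref{eq:Dconv} and \eqref{eq:phi}. The key observation is that every point of the ball $B(x^0,r)$ has the form $x^0+rh$ with $\norm{h}\le1$, and conversely. I would state this reparametrization first. If $\norm{h}\le 1$, then $\norm{(x^0+rh)-x^0}=r\norm{h}\le r$. If $x\in B(x^0,r)$ and $r>0$, set $h=(x-x^0)/r$. If $r=0$, the ball is $\{x^0\}=\{x^0+0\cdot h\}$. Hence $B(x^0,r)=\{x^0+rh\mid \norm{h}\le1\}$.

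\emph{Necessity.} Suppose $B(x^0,r)\subset D$ and fix $i$. For every $h$ with $\norm{h}\le1$, the point $x^0+rh$ lies in $D$, so $g_i(x^0+rh)\le0$. The maximum in \eqref{eq:phi} is attained, as noted after \eqref{eq:phi}, by continuity of the convex function $g_i$ on the compact unit ball. It is attained at some $h^*$ in the unit ball, so $\varphi_i(x^0,r)=g_i(x^0+rh^*)\le0$. Taking the maximum over $i=\overline{1,m}$ gives \eqref{eq:lemcond}.

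\emph{Sufficiency.} Suppose \eqref{eq:lemcond} holds and let $x\in B(x^0,r)$. Write $x=x^0+rh$ with $\norm{h}\le1$. For each $i$ we then have
\[
g_i(x)=g_i(x^0+rh)\le\varphi_i(x^0,r)\le\max_{1\le j\le m}\varphi_j(x^0,r)\le0,
\]
so $x\in D$. Hence $B(x^0,r)\subset D$.

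There is no real obstacle here. The only points needing care are the degenerate case $r=0$ in the reparametrization and the fact that the maximum in \eqref{eq:phi} is attained rather than merely a supremum. Even the second point is not essential: with a supremum, the inequality $\sup\le0$ follows just as well from pointwise nonpositivity. Convexity of the $g_i$, compactness of $D$ and $\operatorname{int}D\ne\emptyset$ are not needed for the equivalence itself; they only guarantee that $\varphi_i$ is well defined and finite.
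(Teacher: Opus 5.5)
Your proposal is correct and follows the paper's proof. Both arguments write every point of $B(x^0,r)$ as $x^0+rh$ with $\norm{h}\le1$, obtain necessity pointwise, and get sufficiency from $g_i(x^0+rh)\le\varphi_i(x^0,r)$; you argue directly where the paper argues by contradiction, which is the same argument, and your remarks on $r=0$ and on the roles of the hypotheses agree with the paper's note that compactness of $D$ and $\operatorname{int}D\ne\emptyset$ are not used.
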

\begin{proof}
\emph{Necessity.} Let $B(x^0,r)\subset D$. Any point $y\in B(x^0,r)$ can be presented as $y=x^0+rh$, $h\in\R^n$, $\norm{h}\le1$. Then the condition $y\in D$ implies that
\[
g_i(x^0+rh)\le0\quad\text{for all}\ h\in\R^n,\ \norm{h}\le1,\qquad i=\overline{1,m}.
\]
Consequently, $\max_{1\le i\le m}\varphi_i(x^0,r)\le0$.

\emph{Sufficiency.} Let condition \eqref{eq:lemcond} be satisfied, and on the contrary, assume that there exists $y\in B(x^0,r)$ such that $y\notin D$. Then $y=x^0+rh$ for some $h\in\R^n$, $\norm{h}\le1$. Since $y\notin D$, there exists $j\in\{1,\dots,m\}$ such that $g_j(x^0+rh)>0$. Hence $\varphi_j(x^0,r)>0$, which contradicts \eqref{eq:lemcond}.
\end{proof}

Note that the proof of Lemma~\ref{lem:ball} uses neither the compactness of $D$ nor the condition $\operatorname{int}D\ne\emptyset$, so the lemma holds for every set of the form \eqref{eq:Dconv}; below, it is also applied to sets with an empty interior.

Let $D$ be a polytope given by the following linear inequalities:
\begin{equation}\label{eq:Dpoly}
D=\{x\in\R^n\mid \ip{a_i}{x}\le b_i,\ i=\overline{1,m}\},
\end{equation}
where $a_i\in\R^n$, $a_i\ne0$, and $b_i\in\R$. Then the functions $\varphi_i(x,r)$ in \eqref{eq:phi} are computed in the following way:
\begin{equation}\label{eq:phipoly}
\varphi_i(x,r)=\max_{\norm{h}\le1}\bigl[\ip{a_i}{x+rh}-b_i\bigr]=\ip{a_i}{x}+r\max_{\norm{h}\le1}\ip{a_i}{h}-b_i=\ip{a_i}{x}+r\norm{a_i}-b_i,\qquad i=\overline{1,m},
\end{equation}
since the maximum of $\ip{a_i}{h}$ over the unit ball is attained at $h=a_i/\norm{a_i}$. Consequently, by Lemma~\ref{lem:ball}, $B(x^0,r)\subset D$ if and only if
\begin{equation}\label{eq:ballcond}
\ip{a_i}{x^0}+r\norm{a_i}\le b_i,\qquad i=\overline{1,m}.
\end{equation}

The classical packing problem of inscribing $k$ equal spheres into $D$ with the maximum volume is \cite{Enkhbat2020,Enkhbat2022MOTOR}
\begin{align}
&\max_{(u,r)}\ V=\frac{k\,\pi^{n/2}}{\Gamma\bigl(\frac n2+1\bigr)}\,r^n\label{eq:kp_obj}\\
\intertext{subject to}
&\ip{a_i}{u^j}+r\norm{a_i}\le b_i,\qquad i=\overline{1,m},\ j=\overline{1,k},\label{eq:kp_in}\\
&\norm{u^i-u^j}^2\ge4r^2,\qquad i,j=\overline{1,k},\ i<j,\label{eq:kp_no}\\
&r\ge0,\label{eq:kp_r}
\end{align}
where $u^1,\dots,u^k$ are the centers of the spheres inscribed in $D$ and $r$ is their radius. Conditions \eqref{eq:kp_in} ensure that all spheres lie in $D$, while conditions \eqref{eq:kp_no} ensure that the spheres do not overlap. In what follows, we use the sphere packing problem with one sphere ($k=1$). Since $V$ is an increasing function of $r$, this problem is equivalent to the following linear programming problem:
\begin{align}
&\max_{(x,r)}\ r\label{eq:sp_obj}\\
\intertext{subject to}
&\ip{a_i}{x}+r\norm{a_i}\le b_i,\qquad i=\overline{1,m},\label{eq:sp_con}\\
&r\ge0.\label{eq:sp_r}
\end{align}
The center of a ball with the maximum radius contained in a set $D$ is called a Chebyshev center of $D$ \cite{BoydVandenberghe2004}; for the polytope \eqref{eq:Dpoly}, it is found from problem \eqref{eq:sp_obj}--\eqref{eq:sp_r}.

\section{The \texorpdfstring{$(\varepsilon,\delta)$}{(epsilon,delta)}-sphere packing problem}\label{sec:epsdelta}

The idea of this section is the following. A solution of a linear programming problem lies on the boundary of the feasible set, and if the constraints include equations, then the feasible set has no interior at all. In practice, however, a planner accepts plans whose objective value is slightly smaller than the optimal one and whose balances are violated slightly, for example, when small imbalances are absorbed by inventories. Therefore, we enlarge the feasible set by two tolerances, $\varepsilon$ for the objective function and $\delta$ for the balances, and look for the plan which is the farthest from the boundary of the enlarged set. The center of the largest sphere inscribed into this set is such a plan, and the radius of the sphere shows how far all variables may deviate simultaneously without violating any requirement.

\subsection{The classical problem and its \texorpdfstring{$(\varepsilon,\delta)$}{(epsilon,delta)}-set}

Consider the following linear programming problem:
\begin{align}
&\max\ f=\ip{c}{x}\label{eq:P_obj}\\
\intertext{subject to}
&Ax=b,\label{eq:P_eq}\\
&x\in X=\{x\in\R^n\mid Gx\le\gamma\},\label{eq:P_X}
\end{align}
where $A$ is an $m\times n$ matrix with rows $a_i$, $i=\overline{1,m}$, $G$ is a $K\times n$ matrix with rows $g_k$, $k=\overline{1,K}$, $c\in\R^n$, $b\in\R^m$ and $\gamma\in\R^K$. Assume that $c\ne0$, $a_i\ne0$ and $g_k\ne0$, that $X$ is a polytope with $\operatorname{int}X\ne\emptyset$, and that the feasible set $F=\{x\in X\mid Ax=b\}$ is nonempty. Denote by $x^*$ a solution to problem \eqref{eq:P_obj}--\eqref{eq:P_X} and by $f^*=\ip{c}{x^*}$ its optimal value. Problem \eqref{eq:P_obj}--\eqref{eq:P_X} will be called the classical problem. In IO analysis, the equality constraints \eqref{eq:P_eq} describe the balance between supply $S$ and demand $D$.

The following statement shows that the sphere packing approach cannot be applied to the classical problem directly.

\begin{proposition}\label{prop:empty}
If $m\ge1$, then $\operatorname{int}F=\emptyset$. Consequently, $B(x^0,r)\subset F$ implies $r=0$.
\end{proposition}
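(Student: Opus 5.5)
The plan is to show directly that every point of $F$ can be moved outside $F$ by an arbitrarily small perturbation, so that no ball of positive radius fits in $F$. Since $m\ge1$, the matrix $A$ has at least one row $a_1$, and by assumption $a_1\ne0$. All points of $F$ lie on the affine hyperplane $H_1=\{x\in\R^n\mid \ip{a_1}{x}=b_1\}$, and $F\subset H_1$. It therefore suffices to show that a hyperplane has empty interior.

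\emph{Step 1 (empty interior).} Take any $x\in F$ and any $t>0$, and set $y=x+t\,a_1/\norm{a_1}$. Then $\norm{y-x}=t$ and
\[
\ip{a_1}{y}=b_1+t\norm{a_1}\ne b_1 .
\]
Hence $y\notin F$. Since $t>0$ is arbitrary, no neighborhood of $x$ is contained in $F$. Because $x\in F$ was arbitrary, $\operatorname{int}F=\emptyset$.

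\emph{Step 2 (the ball degenerates).} Suppose $B(x^0,r)\subset F$ with $r>0$. Then $x^0\in F$, and the point $x^0+r\,a_1/\norm{a_1}$ lies in $B(x^0,r)$ but, by Step 1 with $t=r$, not in $F$. This is a contradiction, so $r=0$. Alternatively, write $F$ in the form \eqref{eq:Dconv} by splitting each equation into the pair $\ip{a_i}{x}-b_i\le0$ and $-\ip{a_i}{x}+b_i\le0$. The condition \eqref{eq:ballcond} from Lemma~\ref{lem:ball}, which holds without the interior assumption as noted after its proof, then gives
\[
\ip{a_1}{x^0}+r\norm{a_1}\le b_1 \quad\text{and}\quad -\ip{a_1}{x^0}+r\norm{a_1}\le -b_1 .
\]
Adding the two inequalities yields $2r\norm{a_1}\le0$, hence $r=0$.

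There is no real obstacle in this argument. The only point needing care is to use the hypothesis $a_i\ne0$ explicitly, since the conclusion fails for a zero row, where the equation is either trivial or infeasible. I would present the second, Lemma-based derivation because it matches the paper's machinery.
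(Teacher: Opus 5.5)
Your proof is correct, and your Steps~1--2 are exactly the paper's argument: for $x^0\in F$ and $r>0$, the point $x^0+r\,a_1/\norm{a_1}$ lies in $B(x^0,r)$ but violates the first balance equation, because $a_1\ne0$. The Lemma-based variant you prefer is also valid. It only needs $F$ to lie in the two half-spaces of the first equation, and it is the same adding-the-two-sides computation the paper later uses for Proposition~\ref{prop:slab}.
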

\begin{proof}
Let $x^0\in F$ and $r>0$. The point $y=x^0+r\,a_1/\norm{a_1}$ belongs to $B(x^0,r)$, but $\ip{a_1}{y}=b_1+r\norm{a_1}\ne b_1$. Hence $y\notin F$, and $B(x^0,r)$ is not contained in $F$.
\end{proof}

In order to overcome this difficulty, we relax the optimality and balance conditions. Let us introduce the following definition.

\begin{definition}\label{def:epsdelta}
Let $\varepsilon\ge0$ and $\delta=(\delta_1,\dots,\delta_m)\in\R^m_+$. The set
\begin{equation}\label{eq:Deps}
D(\varepsilon,\delta)=\{x\in X\mid \ip{c}{x}+\varepsilon\ge f^*,\ \ |\ip{a_i}{x}-b_i|\le\delta_i,\ i=\overline{1,m}\}
\end{equation}
is called the $(\varepsilon,\delta)$-set of problem \eqref{eq:P_obj}--\eqref{eq:P_X}, and its elements are called $(\varepsilon,\delta)$-solutions.
\end{definition}

Clearly, $D(\varepsilon,\delta)$ is a convex polyhedral set and $x^*\in D(\varepsilon,\delta)$. The vector $Ax-b$ measures the imbalance of the system; in the IO model, it is equal to $S-D$. It means that an $(\varepsilon,\delta)$-solution is a plan whose objective value is smaller than the optimal value by at most $\varepsilon$ and whose supply differs from demand by at most $\delta$ on both sides, i.e., $-\delta\le S-D\le\delta$. If $\varepsilon=0$ and $\delta=0$, then $D(\varepsilon,\delta)$ coincides with the set of solutions of the classical problem.

According to the sphere packing approach, we must find a ball $B(x^0,r)$ with the maximum radius $r$ such that $B(x^0,r)\subset D(\varepsilon,\delta)$:
\begin{equation}\label{eq:SP}
r^*(\varepsilon,\delta)=\max\{r\ge0\mid B(x^0,r)\subset D(\varepsilon,\delta)\ \text{for some}\ x^0\in\R^n\}.
\end{equation}
Problem \eqref{eq:SP} will be called the $(\varepsilon,\delta)$-sphere packing problem, and its solution $(x^0,r^*)$ gives the robust $(\varepsilon,\delta)$-plan $x^0$ and its radius $r^*$.

\begin{theorem}\label{thm:lp}
Problem \eqref{eq:SP} is equivalent to the following linear programming problem:
\begin{align}
&\max_{(x^0,r)}\ r\label{eq:LP_obj}\\
\intertext{subject to}
&-\ip{c}{x^0}+r\norm{c}\le\varepsilon-f^*,\label{eq:LP_gdp}\\
&\ip{a_i}{x^0}+r\norm{a_i}\le b_i+\delta_i,\qquad i=\overline{1,m},\label{eq:LP_balp}\\
&-\ip{a_i}{x^0}+r\norm{a_i}\le -b_i+\delta_i,\qquad i=\overline{1,m},\label{eq:LP_balm}\\
&\ip{g_k}{x^0}+r\norm{g_k}\le\gamma_k,\qquad k=\overline{1,K},\label{eq:LP_X}\\
&r\ge0.\label{eq:LP_r}
\end{align}
Problem \eqref{eq:LP_obj}--\eqref{eq:LP_r} has a solution for all $\varepsilon\ge0$ and $\delta\ge0$.
\end{theorem}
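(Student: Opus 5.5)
Our plan has two parts: first we reduce \eqref{eq:SP} to the linear programme, and then we show that the programme is solvable.

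\emph{Step 1: writing $D(\varepsilon,\delta)$ as a polyhedron.} We rewrite $D(\varepsilon,\delta)$ in the form \eqref{eq:Dpoly} with $2m+K+1$ linear inequalities. These are $\ip{-c}{x}\le\varepsilon-f^*$, then $\ip{a_i}{x}\le b_i+\delta_i$ and $\ip{-a_i}{x}\le-b_i+\delta_i$ for each $i$, obtained by splitting $|\ip{a_i}{x}-b_i|\le\delta_i$, and finally $\ip{g_k}{x}\le\gamma_k$. Every normal vector is nonzero, because $c\ne0$, $a_i\ne0$ and $g_k\ne0$, and $\norm{-c}=\norm{c}$, $\norm{-a_i}=\norm{a_i}$. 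Lemma~\ref{lem:ball} uses neither compactness nor a nonempty interior (see the remark after its proof), so \eqref{eq:phipoly}--\eqref{eq:ballcond} apply verbatim. They give that $B(x^0,r)\subset D(\varepsilon,\delta)$ holds if and only if $(x^0,r)$ satisfies \eqref{eq:LP_gdp}--\eqref{eq:LP_X}. Hence the feasible pairs of \eqref{eq:SP} and of \eqref{eq:LP_obj}--\eqref{eq:LP_r} coincide, both problems maximize $r$, and they are equivalent: optimal values agree and optimal centers correspond.

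\emph{Step 2: feasibility.} We take $x^0=x^*$ and $r=0$. Then $\ip{c}{x^*}=f^*\ge f^*-\varepsilon$, $Ax^*=b$ with $\delta\ge0$, and $Gx^*\le\gamma$, so the feasible set of the LP is nonempty.

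\emph{Step 3: boundedness.} This is the main point, since the existence of a solution for a feasible LP follows from boundedness of the objective (the fundamental theorem of LP). Because $X$ is a polytope, it is bounded, say $X\subset B(0,R)$. Any feasible $(x^0,r)$ satisfies \eqref{eq:LP_X}, so $B(x^0,r)\subset X\subset B(0,R)$ by Lemma~\ref{lem:ball}. Comparing diameters then gives $2r\le 2R$, i.e.\ $r\le R$.

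As a more direct alternative to the diameter comparison, we can pick $h=\pm u$ for a unit vector $u$ and note that $x^0\pm ru\in X$ forces $r\le R$. Either way the linear objective $r$ is bounded above on a nonempty polyhedron. An LP with these two properties attains its maximum, which proves the theorem for all $\varepsilon\ge0$ and $\delta\ge0$. No positivity of $r^*$ is claimed here, so no interiority argument is needed.
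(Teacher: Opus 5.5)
Your proof is correct and follows the paper's argument: you split the absolute-value constraints into $2m+K+1$ inequalities, apply Lemma~\ref{lem:ball} through \eqref{eq:ballcond}, take $(x^*,0)$ as a feasible point, and bound $r$ using $B(x^0,r)\subset X$ with $X$ bounded. Your explicit estimate $r\le R$, obtained by comparing diameters, simply spells out the step the paper states in one line.
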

\begin{proof}
Clearly, $D(\varepsilon,\delta)$ is a polytope defined by the $1+2m+K$ linear inequalities $-\ip{c}{x}\le\varepsilon-f^*$, $\ip{a_i}{x}\le b_i+\delta_i$, $-\ip{a_i}{x}\le-b_i+\delta_i$, $i=\overline{1,m}$, and $\ip{g_k}{x}\le\gamma_k$, $k=\overline{1,K}$. By Lemma~\ref{lem:ball} and formula \eqref{eq:ballcond}, $B(x^0,r)\subset D(\varepsilon,\delta)$ if and only if conditions \eqref{eq:LP_gdp}--\eqref{eq:LP_r} hold. The point $(x^*,0)$ satisfies conditions \eqref{eq:LP_gdp}--\eqref{eq:LP_r}. Since $B(x^0,r)\subset X$ and $X$ is bounded, the radius $r$ is bounded on the feasible set of problem \eqref{eq:LP_obj}--\eqref{eq:LP_r}. Hence problem \eqref{eq:LP_obj}--\eqref{eq:LP_r} has a solution.
\end{proof}

\begin{example}\label{ex:toy}
Consider an economy with one sector whose input coefficient is $0.4$ and value added coefficient is $0.5$. Denote by $x$ and $y$ its output and final demand. The classical problem is to maximize $0.5x$ subject to $0.6x-y=0$ and $(x,y)\in X=[80,120]\times[45,66]$. The feasible set is the segment shown in Figure~\ref{fig:toy}(a), and the solution is $(110,66)$ with $f^*=55$. For $\varepsilon=5$ and $\delta=4$, the set $D(\varepsilon,\delta)$ is the quadrangle shown in Figure~\ref{fig:toy}(b). Solving problem \eqref{eq:LP_obj}--\eqref{eq:LP_r}, we find the optimal radius $r^*=\toyR$. Note that the ball with the maximum radius is not unique: its center can be any point $(x,0.6x)$ with $\toyCxMin\le x\le\toyCxMax$. Among these balls, the ball with the maximum GDP has the center $(\toyCx,\toyCy)$. This ball is shown in Figure~\ref{fig:toy}(b). Every plan of this ball is an $(\varepsilon,\delta)$-solution, while the classical solution lies on the boundary of $D(\varepsilon,\delta)$.
\end{example}

\begin{figure}[H]
\centering

\def\figOneCx{104.283381}
\def\figOneCy{62.570028}
\def\figOneR{3.429972}

\pgfplotsset{figonepanel/.style={paperfig, width=0.37\textwidth, axis equal image,
  xmin=76, xmax=124, ymin=42, ymax=70, xtick={80,90,100,110,120}, ytick={50,60,70},
  xlabel={gross output $x$}, ylabel={final demand $y$}}}%
\begin{tikzpicture}
\begin{axis}[figonepanel, title={(a) classical problem: no interior}]
\filldraw[fill=figWash, draw=figInkTwo, line width=0.4pt] (80,45) rectangle (120,66);
\addplot[figInk, line width=1.1pt] coordinates {(80,48) (110,66)}
  node[pos=0.4, sloped, above, figlabel] {$S-D=0$};
\addplot[only marks, figInk, figdot, mark size=1.8pt] coordinates {(110,66)};
\node[anchor=south east, figlabel] at (108.5,66.6) {classical optimum};
\node[text=figInkTwo] at (83,63.5) {$X$};
\end{axis}
\end{tikzpicture}\hfill
\begin{tikzpicture}
\begin{axis}[figonepanel, title={(b) $(\varepsilon,\delta)$-set and its inscribed ball}]
\addplot[fill=figBlue, fill opacity=0.08, draw=none] coordinates {(80.000000,52.000000) (80.000000,45.000000) (81.666667,45.000000) (116.666667,66.000000) (103.333333,66.000000)} -- cycle;
\draw[figInkTwo, line width=0.4pt] (80,45) rectangle (120,66);
\addplot[fill=figGrid, draw=figInkTwo, line width=0.4pt] coordinates {(100.000000,56.000000) (116.666667,66.000000) (103.333333,66.000000) (100.000000,64.000000)} -- cycle;
\addplot[figInk, line width=0.5pt] coordinates {(80,48) (110,66)};
\filldraw[fill=figBlue, fill opacity=0.12, draw=figBlue, line width=0.9pt]
  (\figOneCx,\figOneCy) circle[radius=\figOneR];
\node[text=figBlue, font=\scriptsize] at (\figOneCx,\figOneCy) {$+$};
\addplot[only marks, figInk, figdot, mark size=1.8pt] coordinates {(110,66)};
\draw[figMuted, line width=0.4pt] (100,43) -- (100,69);
\node[anchor=east, figlabel] at (99.3,68.4) {GDP $=f^*-\varepsilon$};
\node[rotate=30.96, figlabel] at (86,57.3) {$S-D=-\delta$};
\node[rotate=30.96, figlabel] at (90,48.3) {$S-D=+\delta$};
\draw[figMuted, line width=0.3pt] (111,63.8) -- (113,59.6);
\node[anchor=north, figlabel, text=figInk] at (113,59.6) {$D(\varepsilon,\delta)$};
\end{axis}
\end{tikzpicture}
\caption{(a) The feasible set of the classical problem in Example~\ref{ex:toy}. (b) The $(\varepsilon,\delta)$-set and the ball with the maximum radius inscribed into it.}
\label{fig:toy}
\end{figure}

\subsection{Properties of the \texorpdfstring{$(\varepsilon,\delta)$}{(epsilon,delta)}-sphere packing problem}

\begin{theorem}\label{thm:pos}
If $\varepsilon>0$ and $\delta_i>0$, $i=\overline{1,m}$, then $r^*(\varepsilon,\delta)>0$. If $m\ge1$ and $\delta=0$, then $r^*(\varepsilon,\delta)=0$ for all $\varepsilon\ge0$.
\end{theorem}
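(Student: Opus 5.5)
The plan is to use Theorem~\ref{thm:lp}: $r^*(\varepsilon,\delta)$ is the optimal value of problem \eqref{eq:LP_obj}--\eqref{eq:LP_r}. It therefore suffices to exhibit a point $\tilde x$ at which all $1+2m+K$ inequalities defining $D(\varepsilon,\delta)$ hold strictly. Then the radius
\[
\tilde r=\min\Bigl\{\tfrac{\ip{c}{\tilde x}+\varepsilon-f^*}{\norm{c}},\ \min_i\tfrac{\delta_i-|\ip{a_i}{\tilde x}-b_i|}{\norm{a_i}},\ \min_k\tfrac{\gamma_k-\ip{g_k}{\tilde x}}{\norm{g_k}}\Bigr\}
\]
is positive, and $(\tilde x,\tilde r)$ is feasible for \eqref{eq:LP_gdp}--\eqref{eq:LP_r}. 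This gives $r^*\ge\tilde r>0$.

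The obstacle is that $x^*$ itself need not work. It satisfies the balance and objective constraints strictly when $\delta>0$, $\varepsilon>0$, but it typically lies on the boundary of $X$. I will therefore perturb it toward the interior of $X$. Take $\hat x\in\operatorname{int}X$, which exists by assumption, so $G\hat x<\gamma$ componentwise. Set $x_t=(1-t)x^*+t\hat x$ for $t\in(0,1]$, and check each family of constraints in turn.
\begin{itemize}
\item For the constraints of $X$: $\ip{g_k}{x_t}\le(1-t)\gamma_k+t\ip{g_k}{\hat x}<\gamma_k$.
\item For the balances: since $Ax^*=b$, we have $|\ip{a_i}{x_t}-b_i|=t\,|\ip{a_i}{\hat x}-b_i|$, which is $<\delta_i$ once $t$ is small.
\item For the objective: $\ip{c}{x_t}=f^*+t(\ip{c}{\hat x}-f^*)\ge f^*-t\,|\ip{c}{\hat x}-f^*|>f^*-\varepsilon$, again for small $t$.
\end{itemize}
Since there are only finitely many constraints, a single small $t>0$ makes all of them strict. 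Taking $\tilde x=x_t$ completes the first claim.

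For the second claim, let $\delta=0$ and $m\ge1$. Suppose $(x^0,r)$ is feasible in \eqref{eq:LP_obj}--\eqref{eq:LP_r}. Adding \eqref{eq:LP_balp} and \eqref{eq:LP_balm} for $i=1$ gives $2r\norm{a_1}\le 2\delta_1=0$. Since $a_1\ne0$, this forces $r=0$. Alternatively, note that $D(\varepsilon,0)\subset F$ and apply Proposition~\ref{prop:empty}. Feasibility of $(x^*,0)$ shows the maximum is attained, so $r^*(\varepsilon,0)=0$.
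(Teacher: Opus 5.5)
Your proof is correct and is essentially the paper's own argument: the paper also moves $x^*$ toward a point $\hat x\in\operatorname{int}X$ along $x_t=(1-t)x^*+t\hat x$ so that all $1+2m+K$ defining inequalities become strict. For $\delta=0$ it uses $D(\varepsilon,0)\subset F$ together with Proposition~\ref{prop:empty}. Your slab-adding alternative is the argument of Proposition~\ref{prop:slab}, and your explicit radius $\tilde r$ simply makes ``interior point implies $r^*>0$'' concrete.

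The one step the paper justifies that you only assert is $G\hat x<\gamma$ for $\hat x\in\operatorname{int}X$, which needs $g_k\ne0$: since $\hat x+s\,g_k\in X$ for small $s>0$, we get $\ip{g_k}{\hat x}\le\gamma_k-s\norm{g_k}^2<\gamma_k$.
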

\begin{proof}
Take any $\hat x\in\operatorname{int}X$. Since $g_k\ne0$, the point $\hat x+s\,g_k$ belongs to $X$ for a sufficiently small $s>0$, and hence $\ip{g_k}{\hat x}\le\gamma_k-s\norm{g_k}^2<\gamma_k$, $k=\overline{1,K}$, i.e., $G\hat x<\gamma$. Consider the points $x_t=(1-t)x^*+t\hat x$, $t\in(0,1]$. Then $Gx_t=(1-t)Gx^*+tG\hat x<\gamma$. Moreover, $Ax_t-b=t(A\hat x-b)$ and $\ip{c}{x_t}-f^*=t(\ip{c}{\hat x}-f^*)$. Therefore, for a sufficiently small $t$, we have $t|\ip{a_i}{\hat x}-b_i|<\delta_i$, $i=\overline{1,m}$, and $t(\ip{c}{\hat x}-f^*)>-\varepsilon$. It means that all inequalities defining $D(\varepsilon,\delta)$ hold strictly at $x_t$. Hence $x_t\in\operatorname{int}D(\varepsilon,\delta)$ and $r^*(\varepsilon,\delta)>0$. If $\delta=0$, then $D(\varepsilon,\delta)\subset F$, and the assertion follows from Proposition~\ref{prop:empty}.
\end{proof}

\begin{remark}\label{rem:eps0}
The condition $\varepsilon>0$ cannot be omitted in general. Indeed, consider the problem $\max x_1$ subject to $x_1-x_2=0$ and $x\in[0,1]^2$. Then $f^*=1$, the set $D(0,\delta)=\{1\}\times[\max\{0,1-\delta\},1]$ is a segment, and $r^*(0,\delta)=0$ for all $\delta\ge0$. More precisely, if $\delta_i>0$, $i=\overline{1,m}$, then $r^*(0,\delta)>0$ if and only if $\ip{c}{x}>f^*$ for some $x\in X$, i.e., if the balance equations restrict the optimal value. Indeed, if $B(x^0,r)\subset D(0,\delta)$ with $r>0$, then the point $x^0+r\,c/\norm{c}$ belongs to $X$ and $\ip{c}{x^0+r\,c/\norm{c}}\ge f^*+r\norm{c}>f^*$. Conversely, if $\ip{c}{x}>f^*$ for some $x\in X$, then a point $\hat x\in\operatorname{int}X$ close to $x$ satisfies $\ip{c}{\hat x}>f^*$, and the proof of Theorem~\ref{thm:pos} with this point $\hat x$ applies to $\varepsilon=0$. In the above example, $\ip{c}{x}\le f^*$ for all $x\in X$, whereas $r^*(0,\delta)>0$ in the IO model considered below (Section~\ref{sec:tradeoff}).
\end{remark}

\begin{theorem}\label{thm:concave}
The optimal radius $r^*(\varepsilon,\delta)$ is non--decreasing, concave and piecewise linear on $\R_+\times\R^m_+$.
\end{theorem}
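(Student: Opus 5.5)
The plan is to view $r^*(\varepsilon,\delta)$ as the optimal value of the linear programming problem \eqref{eq:LP_obj}--\eqref{eq:LP_r} from Theorem~\ref{thm:lp}. In that problem, the parameters $(\varepsilon,\delta)$ enter only the right-hand sides, and they enter affinely. All three properties then follow from standard parametric LP arguments. Theorem~\ref{thm:lp} guarantees that the value is finite and attained for every $(\varepsilon,\delta)\in\R_+\times\R^m_+$. I will write the constraints compactly as $M(x^0,r)\le q+E(\varepsilon,\delta)$ with $r\ge0$. Here the matrix $M$ does not depend on the parameters, $q=(-f^*,b,-b,\gamma)$, and $E$ is the fixed linear map that places $\varepsilon$ in the first row and $\delta_i$ in the two rows for balance $i$.

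\emph{Monotonicity.} If $\varepsilon\le\varepsilon'$ and $\delta\le\delta'$ componentwise, then $D(\varepsilon,\delta)\subset D(\varepsilon',\delta')$ directly from \eqref{eq:Deps}. Hence every ball inscribed in the first set is inscribed in the second, and by definition \eqref{eq:SP} we get $r^*(\varepsilon,\delta)\le r^*(\varepsilon',\delta')$.

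\emph{Concavity.} Take parameters $p_1=(\varepsilon_1,\delta_1)$ and $p_2=(\varepsilon_2,\delta_2)$ with optimal solutions $(x^1,r_1)$ and $(x^2,r_2)$, and let $\lambda\in[0,1]$. The convex combination $(\lambda x^1+(1-\lambda)x^2,\ \lambda r_1+(1-\lambda)r_2)$ satisfies $M(\cdot)\le q+E(\lambda p_1+(1-\lambda)p_2)$ because the constraints are linear in $(x^0,r,\varepsilon,\delta)$ jointly. It also satisfies $r\ge0$. So it is feasible for the problem at $\lambda p_1+(1-\lambda)p_2$, which gives $r^*(\lambda p_1+(1-\lambda)p_2)\ge\lambda r^*(p_1)+(1-\lambda)r^*(p_2)$.

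\emph{Piecewise linearity.} This is the main step. I will use LP duality. The dual of \eqref{eq:LP_obj}--\eqref{eq:LP_r} is to minimize $\ip{u}{q+E(\varepsilon,\delta)}$ over the polyhedron
\[
U=\{u\ge0\mid M_x^\top u=0,\ M_r^\top u\ge1\},
\]
where $M_x$ and $M_r$ denote the columns of $M$ corresponding to $x^0$ and $r$. This polyhedron does not depend on the parameters. Since the primal has a finite optimum for every admissible parameter, strong duality shows that $U\ne\emptyset$ and that the dual optimum equals $r^*$. The objective $u\mapsto\ip{u}{q+E(\varepsilon,\delta)}$ is bounded below on $U$ for each such parameter, and $U$ is pointed because $u\ge0$. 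Hence the minimum is attained at one of the finitely many vertices $u^1,\dots,u^L$ of $U$, so
\[
r^*(\varepsilon,\delta)=\min_{1\le l\le L}\ip{u^l}{q+E(\varepsilon,\delta)}
\]
on $\R_+\times\R^m_+$. This is a minimum of finitely many affine functions. It is therefore piecewise linear (in the affine sense), and it also recovers the concavity shown above.

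The point that needs care is that boundedness below of the dual objective can depend on the parameter through the recession directions of $U$. This does not break the argument: for each fixed admissible parameter the minimum over $U$ is finite, so it is attained at a vertex of $U$, and the vertex set of $U$ is independent of the parameter. The constraint $r\ge0$ can be kept as an explicit sign constraint or absorbed into $M$; either choice does not affect the argument.
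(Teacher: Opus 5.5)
Your proof is correct and follows the paper's argument. Monotonicity comes from the inclusion $D(\varepsilon,\delta)\subset D(\varepsilon',\delta')$, concavity from feasibility of convex combinations of optimal solutions, and piecewise linearity from a parameter-independent dual feasible set with finitely many vertices. The only difference is that the paper normalizes the constraints and minimizes over the compact set $\Lambda$ with $\sum_k\lambda_k=1$ (justified by $r^*\ge0$), whereas you work with the unnormalized, possibly unbounded dual polyhedron $U$ and use its pointedness, which is equally valid.
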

\begin{proof}
If $\varepsilon\le\varepsilon'$ and $\delta\le\delta'$, then $D(\varepsilon,\delta)\subset D(\varepsilon',\delta')$, which implies monotonicity. The parameters $\vartheta=(\varepsilon,\delta)$ enter only the right-hand side of problem \eqref{eq:LP_obj}--\eqref{eq:LP_r}, and they enter it linearly. Let $(x^1,r^1)$ and $(x^2,r^2)$ be solutions to problem \eqref{eq:LP_obj}--\eqref{eq:LP_r} for $\vartheta^1$ and $\vartheta^2$, and let $\lambda\in[0,1]$. Then the point $\lambda(x^1,r^1)+(1-\lambda)(x^2,r^2)$ is feasible for $\lambda\vartheta^1+(1-\lambda)\vartheta^2$. Consequently, $r^*(\lambda\vartheta^1+(1-\lambda)\vartheta^2)\ge\lambda r^1+(1-\lambda)r^2$, which proves concavity. For piecewise linearity, write constraints \eqref{eq:LP_gdp}--\eqref{eq:LP_X} in the normalized form $\ip{\hat p_k}{x^0}+r\le\hat q_k(\vartheta)$ of Proposition~\ref{prop:dual} below. The set $\Lambda=\{\lambda\in\R^N_+\mid\sum_k\lambda_k\hat p_k=0,\ \sum_k\lambda_k=1\}$ does not depend on $\vartheta$, and by the duality theorem of linear programming \cite{Dantzig1963}, $r^*(\vartheta)=\min_{\lambda\in\Lambda}\sum_k\lambda_k\hat q_k(\vartheta)$, since $r^*(\vartheta)\ge0$. The minimum is attained at a basic point of $\Lambda$, i.e., at a point $\lambda$ for which the vectors $(\hat p_k,1)$ with $\lambda_k>0$ are linearly independent. Such a point is determined by the set $\{k\mid\lambda_k>0\}$, so $\Lambda$ has finitely many basic points. Hence $r^*$ is the minimum of finitely many affine functions of $\vartheta$, i.e., it is piecewise linear.
\end{proof}

\begin{proposition}\label{prop:slab}
Let $m\ge1$. For all $\varepsilon\ge0$, the following inequality holds:
\begin{equation}\label{eq:slab}
r^*(\varepsilon,\delta)\le\min_{1\le i\le m}\frac{\delta_i}{\norm{a_i}}.
\end{equation}
In particular, $r^*(\varepsilon,\delta)\to0$ as $\delta\to0$ uniformly in $\varepsilon$.
\end{proposition}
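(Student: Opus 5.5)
The plan is to read the bound directly off the two balance constraints \eqref{eq:LP_balp} and \eqref{eq:LP_balm} of the linear programming problem in Theorem~\ref{thm:lp}. By that theorem, a ball $B(x^0,r)$ lies in $D(\varepsilon,\delta)$ exactly when $(x^0,r)$ satisfies \eqref{eq:LP_gdp}--\eqref{eq:LP_r}. In particular, every such pair satisfies, for each fixed $i\in\{1,\dots,m\}$,
\[
\ip{a_i}{x^0}-b_i+r\norm{a_i}\le\delta_i,\qquad -\bigl(\ip{a_i}{x^0}-b_i\bigr)+r\norm{a_i}\le\delta_i .
\]
Adding the two inequalities eliminates the unknown center $x^0$ and gives $2r\norm{a_i}\le2\delta_i$. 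Since $a_i\ne0$, this means $r\le\delta_i/\norm{a_i}$.

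Geometrically, the $i$-th balance constraint is a slab of width $2\delta_i/\norm{a_i}$. A ball of radius $r$ inside it needs diameter $2r$ at most equal to that width. The $\varepsilon$-constraint and $X$ play no role in this step, and that is why the bound holds for all $\varepsilon\ge0$.

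Applying this to an optimal pair $(x^0,r^*(\varepsilon,\delta))$, which exists by Theorem~\ref{thm:lp}, and taking the minimum over $i$ (with $m\ge1$ so the minimum is over a nonempty set) gives \eqref{eq:slab}.

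For the ``in particular'' part, note that $0\le r^*(\varepsilon,\delta)\le\min_i\delta_i/\norm{a_i}\le\delta_1/\norm{a_1}$. The right-hand side depends only on $\delta$ and tends to $0$ as $\delta\to0$. This gives convergence to zero uniformly in $\varepsilon$.

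No step here is a real obstacle. The only care needed is to use both one-sided constraints together, rather than each alone, and to invoke existence of an optimum from Theorem~\ref{thm:lp}. Alternatively, one can argue directly with any feasible $r$ and take the supremum.
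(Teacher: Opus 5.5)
Your proposal is correct and follows the same route as the paper: the paper adds the two one-sided balance constraints \eqref{eq:LP_balp} and \eqref{eq:LP_balm} for each row $i$, which eliminates $x^0$ and gives $2r\norm{a_i}\le 2\delta_i$. You also spell out two steps the paper leaves implicit, namely taking the minimum over $i$ and noting that the bound depends only on $\delta$, which gives the uniform convergence in $\varepsilon$.
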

\begin{proof}
Adding inequalities \eqref{eq:LP_balp} and \eqref{eq:LP_balm} for the $i$-th row, we obtain $2r\norm{a_i}\le2\delta_i$, $i=\overline{1,m}$.
\end{proof}

Inequality \eqref{eq:slab} shows the rigidity of balance systems. It means that, however large the tolerance $\varepsilon$ is, the margin of a plan is restricted by the balance tolerance, and the most restrictive balance is the narrowest one.

\begin{proposition}\label{prop:fragile}
Assume that $c$ is not a linear combination of the vectors $a_1,\dots,a_m$. Then every solution $x^*$ of problem \eqref{eq:P_obj}--\eqref{eq:P_X} satisfies $\ip{g_k}{x^*}=\gamma_k$ for some $k$. Consequently, for all $\varepsilon$ and $\delta$, no ball with a positive radius centered at $x^*$ is contained in $D(\varepsilon,\delta)$.
\end{proposition}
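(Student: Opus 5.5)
The argument is by contradiction, using the fact that if $x^*$ were strictly inside $X$, the objective could be improved along a direction that preserves the balances. Suppose $x^*$ solves problem \eqref{eq:P_obj}--\eqref{eq:P_X} and $\ip{g_k}{x^*}<\gamma_k$ for all $k=\overline{1,K}$. Since $c$ is not a linear combination of $a_1,\dots,a_m$, it does not lie in the row space of $A$. The row space is the orthogonal complement of $\ker A$, so $c$ has a nonzero component in $\ker A$. Concretely, I would take $d$ to be the orthogonal projection of $c$ onto $\ker A$. Then $Ad=0$ and $\ip{c}{d}=\norm{d}^2>0$.

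Next, consider the points $x^*+sd$ for $s>0$. They satisfy $A(x^*+sd)=b$. Because finitely many strict inequalities $\ip{g_k}{x^*}<\gamma_k$ hold, they persist for sufficiently small $s>0$, so $x^*+sd\in X$ and hence $x^*+sd\in F$. On the other hand, $\ip{c}{x^*+sd}=f^*+s\norm{d}^2>f^*$, which contradicts the optimality of $x^*$. Therefore $\ip{g_k}{x^*}=\gamma_k$ for some $k$.

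For the consequence, fix that index $k$ and any $r>0$. The point $y=x^*+r\,g_k/\norm{g_k}$ lies in $B(x^*,r)$, and $\ip{g_k}{y}=\gamma_k+r\norm{g_k}>\gamma_k$, so $y\notin X\supset D(\varepsilon,\delta)$. Equivalently, constraint \eqref{eq:LP_X} fails at $(x^*,r)$, and the claim follows from Lemma~\ref{lem:ball} and formula \eqref{eq:ballcond}. This holds for every $\varepsilon$ and $\delta$, since $D(\varepsilon,\delta)\subset X$ regardless of the tolerances.

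The only step needing care is the construction of the improving direction $d$ from the hypothesis on $c$, that is, the identity $(\ker A)^\perp=\operatorname{row}(A)$. Everything else is routine.
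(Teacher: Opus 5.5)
Your proof is correct and follows the paper's argument essentially step for step. The paper likewise takes $d$ to be the orthogonal projection of $c$ onto the null space of $A$, moves from a point $x^*$ with $Gx^*<\gamma$ along $d$ to contradict optimality, and then uses the point $x^*+r\,g_k/\norm{g_k}\notin X\supset D(\varepsilon,\delta)$ for the consequence.
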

\begin{proof}
Write $c=A^\top\nu+d$, where $d$ is the orthogonal projection of $c$ onto the null space of $A$. By the assumption, $d\ne0$. Assume that $Gx^*<\gamma$. Then for a sufficiently small $t>0$, the point $x^*+td$ belongs to $X$ and satisfies $A(x^*+td)=b$. Moreover, $\ip{c}{x^*+td}=f^*+t\norm{d}^2>f^*$, which contradicts the optimality of $x^*$. Hence $\ip{g_k}{x^*}=\gamma_k$ for some $k$, and the point $x^*+r\,g_k/\norm{g_k}\in B(x^*,r)$ does not belong to $X$ for any $r>0$.
\end{proof}

\begin{theorem}\label{thm:sym}
Let $D$ be a nonempty convex compact set which is centrally symmetric about a point $p$, i.e., $x\in D$ implies $2p-x\in D$. Then $p$ is a Chebyshev center of $D$.
\end{theorem}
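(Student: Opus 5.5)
Since $D$ is compact, a Chebyshev center exists. Let $x^c$ be one, with maximal radius $r^*$, so that $B(x^c,r^*)\subset D$. The plan is to use symmetry and convexity to transfer this ball to one of the same radius centred at $p$. The statement holds trivially when $r^*=0$, because $p\in D$: pick any $x\in D$; then $2p-x\in D$, and convexity gives $p=\tfrac12 x+\tfrac12(2p-x)\in D$. So assume $r^*>0$.

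First, reflect the ball through $p$. Take any $y=x^c+rh$ with $\norm{h}\le1$ and $0\le r\le r^*$. Then $2p-y\in D$ by symmetry. Also $2p-y=(2p-x^c)-rh$, and $-h$ ranges over the whole unit ball. Hence $B(2p-x^c,r^*)\subset D$.

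Second, average the two balls. For any $h$ with $\norm{h}\le1$, the point $p+r^*h$ is the midpoint of $x^c+r^*h\in D$ and $(2p-x^c)+r^*h\in D$:
\[
p+r^*h=\tfrac12\bigl(x^c+r^*h\bigr)+\tfrac12\bigl((2p-x^c)+r^*h\bigr).
\]
By convexity, $p+r^*h\in D$, so $B(p,r^*)\subset D$.

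Since no ball in $D$ has radius larger than $r^*$, the ball $B(p,r^*)$ has maximal radius, and $p$ is a Chebyshev center. There is no real obstacle here; the only care needed is the existence of a maximizer, which follows from the compactness of $D$. One can also remark that Lemma~\ref{lem:ball} is not needed, because the argument works directly with the inclusion of balls.
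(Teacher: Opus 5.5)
Your proof is correct and uses the same key step as the paper: reflect an inscribed ball through $p$, then take midpoints, $p+rh=\tfrac12(x^0+rh)+\tfrac12(2p-x^0+rh)\in D$, which moves the ball to $p$ without shrinking it. The only difference is how existence is handled. You start from a maximal ball whose existence you take from compactness; this is valid because the set of pairs $(x,r)$ with $B(x,r)\subset D$ is closed and bounded. The paper instead applies the transfer to an arbitrary inscribed ball and then gets a maximal ball at $p$ directly, as the supremum of admissible radii at $p$, using the closedness and boundedness of $D$.
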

\begin{proof}
Let $B(x^0,r)\subset D$. By the symmetry, $B(2p-x^0,r)\subset D$. For any $h$ with $\norm{h}\le r$, we have $p+h=\frac12(x^0+h)+\frac12(2p-x^0+h)\in D$ by the convexity of $D$. Hence $B(p,r)\subset D$. In particular, $p\in D$ (take $x^0\in D$ and $r=0$), and the radii $r$ of the balls $B(p,r)\subset D$ are bounded, since $D$ is bounded. Let $R$ be their supremum. If $R>0$, then the open ball with the center $p$ and radius $R$ lies in $D$, and since $D$ is closed, $B(p,R)\subset D$; if $R=0$, then $B(p,R)=\{p\}\subset D$. Thus $B(p,R)$ is a ball with the maximum radius in $D$, and $p$ is a Chebyshev center of $D$.
\end{proof}

\begin{proposition}\label{prop:box}
Let $(x^0,r^*)$ be a solution to problem \eqref{eq:LP_obj}--\eqref{eq:LP_r}. Then every point $x$ satisfying
\begin{equation}\label{eq:intervals}
x_j^0-\frac{r^*}{\sqrt n}\le x_j\le x_j^0+\frac{r^*}{\sqrt n},\qquad j=\overline{1,n},
\end{equation}
is an $(\varepsilon,\delta)$-solution. Moreover, the largest cube $\{x\mid \norm{x-u}_\infty\le t\}$ contained in $D(\varepsilon,\delta)$ is found from problem \eqref{eq:LP_obj}--\eqref{eq:LP_r} with the variables $(u,t)$ instead of $(x^0,r)$ by replacing every Euclidean norm $\norm{\cdot}$ with the norm $\norm{\cdot}_1$, and its half-width satisfies $t^*\ge r^*/\sqrt n$.
\end{proposition}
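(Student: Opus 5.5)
The proof has three parts: the interval statement, the LP for the largest cube, and the bound $t^*\ge r^*/\sqrt n$.

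\emph{The interval statement.} I would show that the box given by \eqref{eq:intervals} lies inside the ball $B(x^0,r^*)$, which is contained in $D(\varepsilon,\delta)$ by Theorem~\ref{thm:lp} and Lemma~\ref{lem:ball}. If $|x_j-x_j^0|\le r^*/\sqrt n$ for every $j$, then
\[
\norm{x-x^0}^2=\sum_{j=1}^n(x_j-x_j^0)^2\le n\cdot\frac{(r^*)^2}{n}=(r^*)^2 .
\]
So $x\in B(x^0,r^*)\subset D(\varepsilon,\delta)$, i.e., $x$ is an $(\varepsilon,\delta)$-solution. This step is routine.

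\emph{The cube LP.} I would repeat the argument of Lemma~\ref{lem:ball} and formula \eqref{eq:phipoly}, replacing the Euclidean unit ball by the $\ell_\infty$ unit ball. Lemma~\ref{lem:ball} used only that every point of the ball can be written as centre plus radius times a unit vector, so its proof carries over verbatim to the cube $\{x\mid\norm{x-u}_\infty\le t\}$. For a linear constraint $\ip{p}{x}\le q$ we need
\[
\max_{\norm{h}_\infty\le1}\ip{p}{h}=\norm{p}_1 ,
\]
which is attained at $h_j=\operatorname{sign}p_j$; this is Hölder duality between $\ell_\infty$ and $\ell_1$. Hence the cube lies in $D(\varepsilon,\delta)$ if and only if $\ip{p}{u}+t\norm{p}_1\le q$ for each of the $1+2m+K$ inequalities defining $D(\varepsilon,\delta)$ (listed in the proof of Theorem~\ref{thm:lp}), together with $t\ge0$. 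This is exactly problem \eqref{eq:LP_obj}--\eqref{eq:LP_r} with every $\norm{\cdot}$ replaced by $\norm{\cdot}_1$. Existence of a solution follows as in Theorem~\ref{thm:lp}: the point $(x^*,0)$ is feasible, and $t$ is bounded because the cube must lie in the bounded set $X$.

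\emph{The bound on $t^*$.} I would use a feasibility argument. Set $u=x^0$ and $t=r^*/\sqrt n$. For each constraint vector $p$, Cauchy--Schwarz gives $\norm{p}_1\le\sqrt n\,\norm{p}$. Therefore
\[
\ip{p}{x^0}+\frac{r^*}{\sqrt n}\norm{p}_1\le\ip{p}{x^0}+r^*\norm{p}\le q .
\]
So $(x^0,r^*/\sqrt n)$ is feasible for the $\ell_1$ problem, and $t^*\ge r^*/\sqrt n$. Equivalently, this follows from the first part, since the cube described by \eqref{eq:intervals} lies in $D(\varepsilon,\delta)$.

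The only step needing care is justifying that Lemma~\ref{lem:ball} extends to the $\ell_\infty$ ball, and that argument is immediate.
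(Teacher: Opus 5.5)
Your proposal is correct and follows the paper's proof essentially step by step. It uses the box-in-ball estimate for the intervals, then H\"older duality $\max_{\norm{h}_\infty\le1}\ip{p}{h}=\norm{p}_1$ together with the unchanged proof of Lemma~\ref{lem:ball} for the cube problem, and existence via the feasible point $(x^*,0)$ and the boundedness of $X$. For $t^*\ge r^*/\sqrt n$, the paper argues that the cube of half-width $r^*/\sqrt n$ centred at $x^0$ lies in $B(x^0,r^*)$; your constraint-wise check via $\norm{p}_1\le\sqrt n\,\norm{p}$ is the dual form of the same fact, and you note the geometric version yourself.
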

\begin{proof}
If $x$ satisfies \eqref{eq:intervals}, then $\norm{x-x^0}\le\sqrt n\max_j|x_j-x_j^0|\le r^*$, so $x\in B(x^0,r^*)\subset D(\varepsilon,\delta)$. For the cube, H\"older's inequality gives $\max_{\norm{h}_\infty\le1}\ip{p}{h}=\norm{p}_1$, and the proof of Lemma~\ref{lem:ball} applies to the cube as well. As in the proof of Theorem~\ref{thm:lp}, the obtained problem has a solution $(u^*,t^*)$, since the point $(x^*,0)$ satisfies its constraints and $t$ is bounded on its feasible set, because the cube lies in the bounded set $X$. Since the cube with the center $x^0$ and the half-width $r^*/\sqrt n$ lies in $B(x^0,r^*)$, we have $t^*\ge r^*/\sqrt n$.
\end{proof}

It means that a plan is guaranteed to remain an $(\varepsilon,\delta)$-solution if every variable is kept in intervals \eqref{eq:intervals}, even if all variables deviate simultaneously. The condition is sufficient, but not necessary: a plan outside these intervals can still be an $(\varepsilon,\delta)$-solution, for instance, if it lies in the ball $B(x^0,r^*)$. Similar intervals have been used in break-even analysis \cite{EnkhbatTungalag2023}, and the cube corresponds to the survival theory \cite{Ashchepkov1980,Badam2021}.

\begin{proposition}\label{prop:dual}
Denote by $\ip{p_k}{x}\le q_k$, $k=\overline{1,N}$, $N=1+2m+K$, the inequalities which define $D(\varepsilon,\delta)$, and write constraints \eqref{eq:LP_gdp}--\eqref{eq:LP_X} in the normalized form $\ip{\hat p_k}{x^0}+r\le\hat q_k$, where $\hat p_k=p_k/\norm{p_k}$ and $\hat q_k=q_k/\norm{p_k}$. Then the dual problem of \eqref{eq:LP_obj}--\eqref{eq:LP_r} is
\begin{equation}\label{eq:dual}
\min_{\lambda\ge0}\ \Bigl\{\sum_{k=1}^N\lambda_k\hat q_k\ \Bigm|\ \sum_{k=1}^N\lambda_k\hat p_k=0,\ \sum_{k=1}^N\lambda_k\ge1\Bigr\}.
\end{equation}
Problem \eqref{eq:dual} has a solution, and its optimal value is $r^*$. Every dual solution $\lambda^*$ satisfies $\lambda_k^*=0$ for the constraints which are inactive at a solution $(x^0,r^*)$ of problem \eqref{eq:LP_obj}--\eqref{eq:LP_r}, and $\sum_k\lambda_k^*=1$ if $r^*>0$. If, moreover, the dual solution $\lambda^*$ is unique (for instance, if $r^*>0$ and the optimal basis is nondegenerate), then $\partial r^*/\partial\hat q_k=\lambda_k^*$.
\end{proposition}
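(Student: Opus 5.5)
We will treat the normalized problem as a standard-form LP and apply LP duality, then complementary slackness and basic sensitivity theory. Dividing each constraint \eqref{eq:LP_gdp}--\eqref{eq:LP_X} by $\norm{p_k}>0$ (here $p_k\ne0$ because $c\ne0$, $a_i\ne0$ and $g_k\ne0$) gives the equivalent system $\ip{\hat p_k}{x^0}+r\le\hat q_k$, $k=\overline{1,N}$, together with $r\ge0$. The primal is therefore $\max\{r\mid \hat P x^0+r\mathbf 1\le\hat q,\ r\ge0\}$, where $x^0$ is a free variable and $r$ is a nonnegative variable. The standard dual assigns multipliers $\lambda\ge0$ to the $N$ inequalities. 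The free variable $x^0$ produces the equality constraint $\sum_k\lambda_k\hat p_k=0$. The sign-constrained variable $r\ge0$ with objective coefficient $1$ produces the inequality $\sum_k\lambda_k\ge1$. The dual objective is $\sum_k\lambda_k\hat q_k$. This gives exactly \eqref{eq:dual}.

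Existence and equality of optimal values follow from Theorem~\ref{thm:lp}: the primal has an optimal solution, so by the strong duality theorem \cite{Dantzig1963} the dual is feasible, attains its minimum, and has optimal value $r^*$.

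Complementary slackness gives the support statements. For a primal solution $(x^0,r^*)$ and any dual solution $\lambda^*$, we have $\lambda_k^*(\hat q_k-\ip{\hat p_k}{x^0}-r^*)=0$, so $\lambda_k^*=0$ whenever constraint $k$ is inactive. The dual constraint attached to the variable $r$ gives $r^*(\sum_k\lambda_k^*-1)=0$, so $r^*>0$ forces $\sum_k\lambda_k^*=1$. Both are the standard consequences of the complementary slackness theorem, and I would simply invoke it.

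The main point needing care is the derivative claim. I would use the perturbation representation from the proof of Theorem~\ref{thm:concave}: $r^*(\hat q)=\min_{\lambda\in\Lambda'}\ip{\lambda}{\hat q}$, where $\Lambda'$ is the dual feasible set, which does not depend on $\hat q$. This is valid near the current $\hat q$ because the primal stays feasible under small perturbations of $\hat q$.
\begin{itemize}
\item \emph{Concavity and supergradients.} As a minimum of linear functions, $r^*$ is concave in $\hat q$. Every dual optimal $\lambda^*$ satisfies $r^*(\hat q')\le\ip{\lambda^*}{\hat q'}$ for all $\hat q'$, with equality at $\hat q$, so $\lambda^*$ is a supergradient. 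By Danskin's theorem, the superdifferential equals the set of dual optimal solutions.
\item \emph{Differentiability.} If $\lambda^*$ is unique, the superdifferential is a singleton. A concave function is differentiable at a point where its superdifferential is a singleton, so $\partial r^*/\partial\hat q_k=\lambda_k^*$.
\item \emph{Why nondegeneracy suffices.} If $r^*>0$ and the optimal primal basis is nondegenerate, all basic primal variables are positive (slacks of inactive constraints and $r$). Complementary slackness then pins the dual down as the unique solution of the basis system $B^\top\lambda=e$, so the dual solution is unique.
\end{itemize}
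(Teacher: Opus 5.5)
Your proposal is correct. The derivation of the dual, the appeal to strong duality via Theorem~\ref{thm:lp}, the complementary slackness consequences, and the uniqueness of the dual under a nondegenerate basis (via a nonsingular basis system) coincide with the paper's proof.

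The derivative claim is where you argue differently. The paper works with the compact set $\Lambda=\{\lambda\ge0\mid\sum_k\lambda_k\hat p_k=0,\ \sum_k\lambda_k=1\}$ from the proof of Theorem~\ref{thm:concave}. It writes $r^*$ as the minimum of finitely many linear functions $\sum_k\lambda_k\hat q_k$, indexed by the basic points of $\Lambda$. If $\lambda^*$ is the only minimizing basic point, then $r^*=\sum_k\lambda_k^*\hat q_k$ on a whole neighborhood of $\hat q$. You instead identify the superdifferential of the concave function $\hat q\mapsto r^*(\hat q)$ with the dual optimal set, and use that a singleton superdifferential implies differentiability. The paper's route is elementary and gives more: local affinity, so the weights are exact rates of change for small finite shifts of the constraints. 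Your route does not use the finiteness of basic points and extends beyond polyhedral problems. Without uniqueness, it still gives one-sided directional derivatives as minima over the dual optimal set.

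Two steps need tightening. First, you say the min-representation holds near $\hat q$ because ``the primal stays feasible under small perturbations''. This fails when $r^*=0$: for $\delta=0$, moving one side of a balance slab inward makes the primal infeasible. Add the paper's observation that uniqueness forces $r^*>0$: if $r^*=0$, then $2\lambda^*$ is another dual solution. Then $(x^0,r^*-\norm{\hat q'-\hat q}_\infty)$ is feasible for all $\hat q'$ with $\norm{\hat q'-\hat q}_\infty\le r^*$, so $r^*$ is finite and positive near $\hat q$.

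Second, Danskin's theorem needs a compact index set. Your dual feasible set with $\sum_k\lambda_k\ge1$ is unbounded, since it contains $t\lambda$ for every feasible $\lambda$ and every $t\ge1$. Because $r^*>0$ near $\hat q$, the minimum over this set equals the minimum over the compact set $\Lambda$, and Danskin's theorem applies there. Alternatively, the inclusion you actually need (every supergradient is dual optimal) follows directly from the positive homogeneity of $r^*$ in $\hat q$ together with the separation of a point from a closed convex set.
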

\begin{proof}
The existence of a dual solution and the equality of the optimal values follow from the duality theorem of linear programming, since problem \eqref{eq:LP_obj}--\eqref{eq:LP_r} has a solution by Theorem~\ref{thm:lp}. The dual constraint corresponding to the variable $r\ge0$ is $\sum_k\lambda_k\ge1$, and by the complementary slackness condition it holds as an equality if $r^*>0$; the same condition gives $\lambda_k^*=0$ for the constraints which are inactive at $(x^0,r^*)$. If $\lambda^*$ is unique, then $r^*>0$, since for $r^*=0$ the point $2\lambda^*$ would be another solution of problem \eqref{eq:dual}, and hence $\sum_k\lambda_k^*=1$. By the proof of Theorem~\ref{thm:concave}, $r^*$ is the minimum of the finitely many linear functions $\sum_k\lambda_k\hat q_k$ of $\hat q$, where $\lambda$ runs over the basic points of $\Lambda$, and $\lambda^*$ is the only basic point at which this minimum is attained. Hence $r^*=\sum_k\lambda_k^*\hat q_k$ in a neighborhood of $\hat q$, which gives $\partial r^*/\partial\hat q_k=\lambda_k^*$. Finally, let $r^*>0$ and let the optimal basis be nondegenerate, i.e., exactly $n+1$ of constraints \eqref{eq:LP_gdp}--\eqref{eq:LP_X} are active at $(x^0,r^*)$, and their vectors $(\hat p_k,1)$ are linearly independent. Every dual solution vanishes on the inactive constraints and satisfies $\sum_k\lambda_k(\hat p_k,1)=(0,1)$, and these equations have at most one solution supported by the active constraints, so the dual solution is unique.
\end{proof}

We call $\lambda^*$ the criticality weights. If $r^*>0$, they sum to one and distribute the responsibility for the limited radius among the constraints. Since $\sum_k\lambda_k^*\hat p_k=0$, the normals of the active constraints surround the center of the ball, and no small shift of the center can increase the radius.

\section{Input--output analysis}\label{sec:io}

In this section, we show how to apply the sphere packing approach to input--output analysis.

\subsection{Input--output model}\label{sec:iomodel}

Consider an economy with $n$ sectors. Denote by $z_{ij}$ the intermediate delivery from sector $i$ to sector $j$, by $\bar x_i$ the gross output of sector $i$ and by $\bar y_i=\bar x_i-\sum_{j=1}^nz_{ij}$ its final demand in the observed input--output table. The technical coefficients $a_{ij}=z_{ij}/\bar x_j$, $i,j=\overline{1,n}$, form the matrix $A$. Denote by $v_j$, $m_j$ and $w_j$ the value added, the imported inputs and the compensation of employees per unit of output of sector $j$, and let $\bar V=\ip{v}{\bar x}$, $\bar M=\ip{m}{\bar x}$ and $\bar W=\ip{w}{\bar x}$ be the observed GDP, imported inputs and wage bill, respectively. Assume that $z_{ij}\ge0$, $i,j=\overline{1,n}$, that $\bar x_i>0$ and $\bar y_i>0$, $i=\overline{1,n}$, that $m_j\ge0$ and $w_j\ge0$, $j=\overline{1,n}$, and that $\bar V>0$, $\bar M>0$ and $\bar W>0$. These conditions hold for all input--output tables used in Section~\ref{sec:numerical}: the smallest share of final demand in the output of a sector is \minShareFD\ (\minShareFDsector, \minShareFDyear). If, moreover, every column sum $\sum_{i=1}^na_{ij}$ is less than one, i.e., the intermediate inputs of every sector are smaller than its output, then every eigenvalue of $A$ has modulus at most $\max_j\sum_{i=1}^na_{ij}<1$, and the Leontief inverse $(I-A)^{-1}=\sum_{k=0}^\infty A^k$ exists and is nonnegative \cite{HawkinsSimon1949}.

A plan of the economy consists of gross outputs $x=(x_1,\dots,x_n)$ and final demands $y=(y_1,\dots,y_n)$. The plan must satisfy the balance between the supply $S_i=x_i$ of sector $i$ and the demand $D_i=\sum_{j=1}^na_{ij}x_j+y_i$ for its output. Moreover, outputs and final demands must stay within operating corridors around their observed levels, which reflect capacities and minimal operating levels, and the use of imported inputs and labor is limited by the available foreign exchange and labor force. Then the classical LP--IO problem, which maximizes GDP, is written as follows:
\begin{align}
&\max_{(x,y)}\ \ip{v}{x}\label{eq:IO_obj}\\
\intertext{subject to}
&(I-A)x-y=0,\label{eq:IO_bal}\\
&(1-\kappa)\bar x\le x\le(1+\kappa)\bar x,\qquad (1-\kappa)\bar y\le y\le(1+\kappa)\bar y,\label{eq:IO_corr}\\
&\ip{m}{x}\le(1+\theta)\bar M,\qquad \ip{w}{x}\le(1+\theta)\bar W,\label{eq:IO_res}
\end{align}
where $\kappa$ is the half-width of the corridors and $\theta$ is the headroom for imported inputs and the wage bill. Problem \eqref{eq:IO_obj}--\eqref{eq:IO_res} is a particular case of problem \eqref{eq:P_obj}--\eqref{eq:P_X} with the variables $\xi=(x,y)\in\R^{2n}$, the objective vector $c=(v,0)$, the balance matrix $[I-A,\,-I]$, the right-hand side $b=0$ and the set $X$ defined by \eqref{eq:IO_corr}--\eqref{eq:IO_res}. The last $n$ components of any linear combination $\sum_{i=1}^n\nu_i\bigl((I-A)_i,-e_i\bigr)$ of the rows of the balance matrix are equal to $-\nu$. Therefore, the objective vector $c=(v,0)$ with $v\ne0$ (recall that $\bar V>0$) is not a linear combination of these rows, and Proposition~\ref{prop:fragile} applies to problem \eqref{eq:IO_obj}--\eqref{eq:IO_res}: every classical optimal plan lies on the boundary of $X$. The same argument holds in the relative coordinates introduced in Section~\ref{sec:iosp}.

In practice, small imbalances between supply and demand are absorbed by changes in inventories, and a plan with GDP slightly smaller than the maximal one is still acceptable. Therefore, we introduce the $(\varepsilon,\delta)$-set of problem \eqref{eq:IO_obj}--\eqref{eq:IO_res}:
\begin{equation}\label{eq:DIO}
D_{IO}(\varepsilon,\delta)=\{(x,y)\in X\mid \ip{v}{x}+\varepsilon\ge f^*,\ \ |S_i-D_i|\le\delta_i,\ i=\overline{1,n}\},
\end{equation}
where $S_i-D_i=x_i-\sum_{j=1}^na_{ij}x_j-y_i$ is the imbalance of sector $i$, the balance tolerance is measured relative to output, $\delta_i=\delta\,\bar x_i$, $i=\overline{1,n}$, and the GDP tolerance relative to the optimal value, $\varepsilon=\varepsilon_{\rm rel}f^*$.

\subsection{Sphere packing problem for the input--output model}\label{sec:iosp}

Note that the outputs and final demands of sectors differ by more than two orders of magnitude (from 0.22 to 47.9 trillion MNT in 2024), and a ball measured in monetary units is dominated by the smallest sectors (see Section~\ref{sec:sensitivity}). For this reason, we measure every output and every final demand relative to its observed level. Introduce the relative coordinates $\zeta=(\zeta_1,\dots,\zeta_{2n})\in\R^{2n}$ by
\begin{equation}\label{eq:relcoord}
x_j=\bar x_j(1+\zeta_j),\qquad y_j=\bar y_j(1+\zeta_{n+j}),\qquad j=\overline{1,n}.
\end{equation}
Then $\zeta_j$ and $\zeta_{n+j}$ are the relative deviations of the output and the final demand of sector $j$ from the observed table, and the observed table corresponds to $\zeta=0$. Denote by $B(\zeta^0,r)$ a sphere with a center $\zeta^0\in\R^{2n}$ and radius $r\ge0$:
\begin{equation}\label{eq:ballIO}
B(\zeta^0,r)=\{\zeta\in\R^{2n}\mid \norm{\zeta-\zeta^0}\le r\}.
\end{equation}
In the original units, $B(\zeta^0,r)$ is the ellipsoid
\begin{equation}\label{eq:ellipsoid}
\sum_{j=1}^n\Bigl(\frac{x_j-x_j^0}{\bar x_j}\Bigr)^2+\sum_{j=1}^n\Bigl(\frac{y_j-y_j^0}{\bar y_j}\Bigr)^2\le r^2
\end{equation}
with the center $x_j^0=\bar x_j(1+\zeta_j^0)$, $y_j^0=\bar y_j(1+\zeta_{n+j}^0)$, $j=\overline{1,n}$. Thus, the radius $r$ is a unit-free measure of joint relative deviations of all outputs and final demands. Below, all radii are given in percent.

Let us introduce the following coefficients of the observed table:
\begin{gather}
b_{ij}=\frac{z_{ij}}{\bar x_i},\qquad s_i=\frac{\bar y_i}{\bar x_i},\qquad \rho_i=\Bigl((1-b_{ii})^2+\sum_{j\ne i}b_{ij}^2+s_i^2\Bigr)^{1/2},\qquad i,j=\overline{1,n},\label{eq:rho}\\
\sigma_j=\frac{v_j\bar x_j}{\bar V},\qquad \mu_j=\frac{m_j\bar x_j}{\bar M},\qquad \omega_j=\frac{w_j\bar x_j}{\bar W},\qquad j=\overline{1,n}.\label{eq:shares}
\end{gather}
Here $b_{ij}$ is the share of the output of sector $i$ sold to sector $j$, $s_i$ is the share sold to final demand, and $\sigma_j$, $\mu_j$ and $\omega_j$ are the shares of sector $j$ in GDP, imported inputs and the wage bill of the observed table, respectively. Denote also by
\begin{equation}\label{eq:eta}
\eta=\frac{f^*-\varepsilon}{\bar V}-1
\end{equation}
the minimal growth of GDP over its observed level, which is required in the set $D_{IO}(\varepsilon,\delta)$.

According to the sphere packing approach, we must find a sphere $B(\zeta^0,r)$ with the maximum radius $r$ such that every plan of this sphere belongs to $D_{IO}(\varepsilon,\delta)$. For this purpose, we rewrite problem \eqref{eq:LP_obj}--\eqref{eq:LP_r} taking into account the constraints of $D_{IO}(\varepsilon,\delta)$ given by \eqref{eq:DIO} as the following linear programming problem:
\begin{align}
&\max_{(\zeta^0,r)}\ r\label{eq:IOSP_obj}\\
\intertext{subject to}
&\sum_{j=1}^n\sigma_j\zeta_j^0-r\norm{\sigma}\ge\eta,\label{eq:IOSP_gdp}\\
&\zeta_i^0-\sum_{j=1}^nb_{ij}\zeta_j^0-s_i\zeta_{n+i}^0+\rho_i\,r\le\delta,\qquad i=\overline{1,n},\label{eq:IOSP_sur}\\
&-\zeta_i^0+\sum_{j=1}^nb_{ij}\zeta_j^0+s_i\zeta_{n+i}^0+\rho_i\,r\le\delta,\qquad i=\overline{1,n},\label{eq:IOSP_sho}\\
&\zeta_j^0+r\le\kappa,\qquad -\zeta_j^0+r\le\kappa,\qquad j=\overline{1,2n},\label{eq:IOSP_corr}\\
&\sum_{j=1}^n\mu_j\zeta_j^0+r\norm{\mu}\le\theta,\qquad \sum_{j=1}^n\omega_j\zeta_j^0+r\norm{\omega}\le\theta,\label{eq:IOSP_res}\\
&r\ge0.\label{eq:IOSP_r}
\end{align}
Constraint \eqref{eq:IOSP_gdp} is the GDP requirement, constraints \eqref{eq:IOSP_sur} and \eqref{eq:IOSP_sho} are the surplus side $S_i-D_i\le\delta\bar x_i$ and the shortage side $D_i-S_i\le\delta\bar x_i$ of the supply--demand balances, constraints \eqref{eq:IOSP_corr} are the corridors for outputs and final demands, and constraints \eqref{eq:IOSP_res} restrict imported inputs and the wage bill. Problem \eqref{eq:IOSP_obj}--\eqref{eq:IOSP_r} will be called the sphere packing problem for the input--output model. It has $2n+1$ variables, $6n+3$ linear inequality constraints \eqref{eq:IOSP_gdp}--\eqref{eq:IOSP_res} and the sign constraint $r\ge0$.

\begin{theorem}\label{thm:io}
In the relative coordinates \eqref{eq:relcoord}, $B(\zeta^0,r)\subset D_{IO}(\varepsilon,\delta)$ if and only if $(\zeta^0,r)$ satisfies constraints \eqref{eq:IOSP_gdp}--\eqref{eq:IOSP_r}. Consequently, problem \eqref{eq:IOSP_obj}--\eqref{eq:IOSP_r} is the $(\varepsilon,\delta)$-sphere packing problem \eqref{eq:SP} for the input--output model \eqref{eq:IO_obj}--\eqref{eq:IO_res} written in the relative coordinates \eqref{eq:relcoord}.
\end{theorem}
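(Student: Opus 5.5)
The argument will apply Lemma~\ref{lem:ball} and formula \eqref{eq:ballcond} to $D_{IO}(\varepsilon,\delta)$ written in the coordinates $\zeta$. The substitution \eqref{eq:relcoord} is an affine bijection $\zeta\mapsto(x,y)$. Hence every linear inequality defining $D_{IO}(\varepsilon,\delta)$ becomes a linear inequality $\ip{p}{\zeta}\le q$ in $\R^{2n}$. The set $\{\zeta\mid (x(\zeta),y(\zeta))\in D_{IO}(\varepsilon,\delta)\}$ is therefore a polyhedron of the form \eqref{eq:Dpoly}, and $B(\zeta^0,r)$ lies in it if and only if $\ip{p}{\zeta^0}+r\norm{p}\le q$ for each defining inequality. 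The proof then reduces to computing each transformed inequality, normalizing it by a positive constant, and checking that it matches the corresponding constraint among \eqref{eq:IOSP_gdp}--\eqref{eq:IOSP_res}. Note that \eqref{eq:ballcond} is unchanged when both sides are divided by a positive number.

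The computations will go as follows.

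\emph{GDP constraint.} We have $\ip{v}{x}=\bar V+\sum_j v_j\bar x_j\zeta_j=\bar V(1+\sum_j\sigma_j\zeta_j)$. Dividing $\ip{v}{x}\ge f^*-\varepsilon$ by $\bar V>0$ gives $\sum_j\sigma_j\zeta_j\ge\eta$. The normal vector is $(\sigma,0)$, whose norm is $\norm{\sigma}$, and this yields \eqref{eq:IOSP_gdp}.

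\emph{Balance constraints.} Since $\bar x_i-\sum_j z_{ij}-\bar y_i=0$, the imbalance is
\[
S_i-D_i=\bar x_i\Bigl(\zeta_i-\sum_j b_{ij}\zeta_j-s_i\zeta_{n+i}\Bigr),
\]
using $a_{ij}\bar x_j=z_{ij}=b_{ij}\bar x_i$. Dividing $\pm(S_i-D_i)\le\delta\bar x_i$ by $\bar x_i>0$ gives inequalities whose normal vector has the entry $1-b_{ii}$ in place $i$, the entries $-b_{ij}$ for $j\ne i$, and the entry $-s_i$ in place $n+i$. Its norm is exactly $\rho_i$, which yields \eqref{eq:IOSP_sur} and \eqref{eq:IOSP_sho}.

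\emph{Corridor constraints.} After division by $\bar x_j$ or $\bar y_j$, the corridors become $|\zeta_j|\le\kappa$, with unit normals $\pm e_j$. This gives \eqref{eq:IOSP_corr}.

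\emph{Resource constraints.} Imported inputs satisfy $\ip{m}{x}=\bar M(1+\sum_j\mu_j\zeta_j)$, and the inequality divided by $\bar M$ gives a constraint with normal $(\mu,0)$ and norm $\norm{\mu}$. The wage constraint is handled identically with $\omega$ and $\bar W$. Together these give \eqref{eq:IOSP_res}.

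The only place requiring care is the balance rows. The key identity is $a_{ij}\bar x_j/\bar x_i=b_{ij}$, together with the use of the observed balance $\bar y_i=\bar x_i-\sum_j z_{ij}$ to cancel the constant term. One must also check that the diagonal entry combines into $1-b_{ii}$ rather than appearing as two separate terms, since otherwise the norm $\rho_i$ would be misidentified. All divisors $\bar x_i$, $\bar y_i$, $\bar V$, $\bar M$, $\bar W$ are positive by the standing assumptions, so no inequality reverses. Normal vectors that vanish (for instance $\mu=0$) pose no problem, because the condition \eqref{eq:ballcond} remains valid with $\norm{p}=0$.

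With this equivalence established, maximizing $r$ over the constraints \eqref{eq:IOSP_gdp}--\eqref{eq:IOSP_r} is exactly problem \eqref{eq:SP} for the ball $B(\zeta^0,r)$ in $\zeta$-space. This is the second assertion of the theorem, and Theorem~\ref{thm:lp} then gives the existence of a solution in the same way.
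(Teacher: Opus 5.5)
Your proposal is correct and follows essentially the same route as the paper. You rewrite each defining inequality of $D_{IO}(\varepsilon,\delta)$ in the coordinates \eqref{eq:relcoord}, cancel the constant term of the balance rows using the observed balance $\bar x_i-\sum_j z_{ij}-\bar y_i=0$, divide by the positive scale factors, and read off the normal norms $\norm{\sigma}$, $\rho_i$, $1$, $\norm{\mu}$, $\norm{\omega}$ in \eqref{eq:ballcond}. Your extra remarks only make explicit what the paper leaves implicit: the affine bijection that lets you use Lemma~\ref{lem:ball} in $\zeta$-space, and the case of a zero normal, which cannot actually occur here because $\sum_j\sigma_j=\sum_j\mu_j=\sum_j\omega_j=1$.
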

\begin{proof}
We rewrite every constraint of $D_{IO}(\varepsilon,\delta)$ in the relative coordinates \eqref{eq:relcoord} and apply formula \eqref{eq:ballcond}. Note that dividing a linear inequality by a positive number does not change condition \eqref{eq:ballcond}. Since the observed table is balanced, $\bar x_i-\sum_{j=1}^nz_{ij}-\bar y_i=0$, and the imbalance of sector $i$ takes the form
\[
S_i-D_i=\bar x_i\zeta_i-\sum_{j=1}^nz_{ij}\zeta_j-\bar y_i\zeta_{n+i}=\bar x_i\Bigl(\zeta_i-\sum_{j=1}^nb_{ij}\zeta_j-s_i\zeta_{n+i}\Bigr).
\]
Therefore, the balance condition $|S_i-D_i|\le\delta\bar x_i$ is equivalent to $|\zeta_i-\sum_{j=1}^nb_{ij}\zeta_j-s_i\zeta_{n+i}|\le\delta$. The normal vector of this linear function has the norm $\rho_i$, and formula \eqref{eq:ballcond} gives constraints \eqref{eq:IOSP_sur} and \eqref{eq:IOSP_sho}. The GDP condition $\ip{v}{x}\ge f^*-\varepsilon$ takes the form $\bar V\bigl(1+\sum_{j=1}^n\sigma_j\zeta_j\bigr)\ge f^*-\varepsilon$, i.e., $\sum_{j=1}^n\sigma_j\zeta_j\ge\eta$, which gives \eqref{eq:IOSP_gdp}. The corridors \eqref{eq:IO_corr} take the form $-\kappa\le\zeta_j\le\kappa$, $j=\overline{1,2n}$, and their normals have the unit norm, which gives \eqref{eq:IOSP_corr}. Finally, the condition $\ip{m}{x}\le(1+\theta)\bar M$ is equivalent to $\sum_{j=1}^n\mu_j\zeta_j\le\theta$, and the wage condition is considered in the same way, which gives \eqref{eq:IOSP_res}.
\end{proof}

In other words, in the relative coordinates, problem \eqref{eq:IO_obj}--\eqref{eq:IO_res} is the problem of maximizing $\sum_{j=1}^n\sigma_j\zeta_j$ subject to the balance equations $\zeta_i-\sum_{j=1}^nb_{ij}\zeta_j-s_i\zeta_{n+i}=0$, $i=\overline{1,n}$, and $\zeta\in X_\zeta=\{\zeta\mid|\zeta_k|\le\kappa,\ k=\overline{1,2n},\ \sum_{j=1}^n\mu_j\zeta_j\le\theta,\ \sum_{j=1}^n\omega_j\zeta_j\le\theta\}$, with the optimal value $(f^*-\bar V)/\bar V$, and problem \eqref{eq:IOSP_obj}--\eqref{eq:IOSP_r} is problem \eqref{eq:LP_obj}--\eqref{eq:LP_r} for this problem with the GDP tolerance $\varepsilon/\bar V$ and the balance tolerances $\delta_i=\delta$. The assumptions of Section~\ref{sec:epsdelta} hold: the normal vectors are nonzero, $X_\zeta$ is bounded, the observed table $\zeta=0$ is feasible, and, if $\kappa>0$, then $\operatorname{int}X_\zeta\ne\emptyset$, since the point with $\zeta_k=-\kappa/2$, $k=\overline{1,2n}$, satisfies all inequalities of $X_\zeta$ strictly. Therefore, Theorems~\ref{thm:lp} and~\ref{thm:concave} and Propositions~\ref{prop:slab}--\ref{prop:dual} hold for the optimal radius $r^*(\varepsilon,\delta)$ of problem \eqref{eq:IOSP_obj}--\eqref{eq:IOSP_r}, and so does Theorem~\ref{thm:pos}, whose first assertion requires $\kappa>0$ (if $\kappa=0$, then $X_\zeta=\{0\}$ and $r^*=0$).

Let $(\zeta^0,r^*)$ be a solution to problem \eqref{eq:IOSP_obj}--\eqref{eq:IOSP_r}. Then the robust plan is $x_j^0=\bar x_j(1+\zeta_j^0)$, $y_j^0=\bar y_j(1+\zeta_{n+j}^0)$, $j=\overline{1,n}$, and every plan of ellipsoid \eqref{eq:ellipsoid} with $r=r^*$ is an $(\varepsilon,\delta)$-solution. By Proposition~\ref{prop:box} applied in $\R^{2n}$, the balance of the economy and the required level of GDP are guaranteed if a planner keeps the outputs and final demands in the following intervals:
\begin{equation}\label{eq:IOintervals}
\begin{aligned}
&\bar x_j\Bigl(1+\zeta_j^0-\frac{r^*}{\sqrt{2n}}\Bigr)\le x_j\le\bar x_j\Bigl(1+\zeta_j^0+\frac{r^*}{\sqrt{2n}}\Bigr),\\
&\bar y_j\Bigl(1+\zeta_{n+j}^0-\frac{r^*}{\sqrt{2n}}\Bigr)\le y_j\le\bar y_j\Bigl(1+\zeta_{n+j}^0+\frac{r^*}{\sqrt{2n}}\Bigr),\qquad j=\overline{1,n}.
\end{aligned}
\end{equation}

Problem \eqref{eq:IOSP_obj}--\eqref{eq:IOSP_r} shows the structure of the robust plan explicitly. Adding constraints \eqref{eq:IOSP_sur} and \eqref{eq:IOSP_sho}, we obtain $\rho_i r\le\delta$, and constraints \eqref{eq:IOSP_corr} give $r\le\kappa$. The following statement gives a closed-form bound of the radius.

\begin{proposition}\label{prop:rho}
Let $H_i=\sum_{j=1}^nb_{ij}^2+s_i^2$, $i=\overline{1,n}$. Then $\rho_i^2=1-2b_{ii}+H_i$, and the following assertions hold:
\begin{enumerate}
\item[(a)] the balance constraints \eqref{eq:IOSP_sur}--\eqref{eq:IOSP_sho} of sector $i$ define a slab with the half-width $\delta/\rho_i$;
\item[(b)] $r^*(\varepsilon,\delta)\le r_0(\delta)=\min\{\kappa,\ \delta/\rho_{\max}\}$ for all $\varepsilon\ge0$, where $\rho_{\max}=\max_i\rho_i$;
\item[(c)] if $r_0(\delta)\le\theta$, then $r^*(\varepsilon,\delta)=r_0(\delta)$ for all $\varepsilon\ge\varepsilon_{\rm obs}(\delta)=f^*-\bar V+r_0(\delta)\norm{v\circ\bar x}$, and the observed table $\zeta=0$ is a center of a sphere with the maximum radius.
\end{enumerate}
\end{proposition}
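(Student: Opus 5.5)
The plan is to prove the identity for $\rho_i^2$ by direct expansion, and then treat (a), (b), (c) in turn, each using only the explicit form of problem \eqref{eq:IOSP_obj}--\eqref{eq:IOSP_r} and Theorem~\ref{thm:io}.

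\emph{Identity.} Expanding the definition \eqref{eq:rho} gives
\[
\rho_i^2=(1-b_{ii})^2+\sum_{j\ne i}b_{ij}^2+s_i^2=1-2b_{ii}+b_{ii}^2+\sum_{j\ne i}b_{ij}^2+s_i^2=1-2b_{ii}+H_i .
\]

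\emph{Part (a).} The balance condition of sector $i$ is $|\ip{p_i}{\zeta}|\le\delta$, where $p_i$ is the normal vector with $\norm{p_i}=\rho_i$ identified in the proof of Theorem~\ref{thm:io}. Note that $\rho_i\ge 1-b_{ii}>0$, since $b_{ii}<1$ because $\bar y_i>0$. Hence the set is the slab $\{\zeta\mid |\ip{p_i/\rho_i}{\zeta}|\le\delta/\rho_i\}$. Its half-width is therefore $\delta/\rho_i$.

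\emph{Part (b).} Adding \eqref{eq:IOSP_sur} and \eqref{eq:IOSP_sho} for sector $i$ gives $2\rho_i r\le2\delta$; this is Proposition~\ref{prop:slab} in the present coordinates. Adding the two inequalities in \eqref{eq:IOSP_corr} for a single $j$ gives $r\le\kappa$. Taking the sector with $\rho_i=\rho_{\max}$ yields $r^*\le\min\{\kappa,\delta/\rho_{\max}\}=r_0(\delta)$, and this holds for every $\varepsilon$.

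\emph{Part (c).} By (b), it suffices to check that the point $(\zeta^0,r)=(0,r_0)$ is feasible for \eqref{eq:IOSP_gdp}--\eqref{eq:IOSP_r}. Then it is optimal, and $\zeta=0$ is a Chebyshev center. The constraints are checked one by one.
\begin{itemize}
\item The balance constraints reduce to $\rho_i r_0\le\delta$. This holds because $r_0\le\delta/\rho_{\max}$.
\item The corridor constraints reduce to $r_0\le\kappa$, which holds by the definition of $r_0$.
\item For the resource constraints, I will use that $\mu_j,\omega_j\ge0$ (because $m,w\ge0$ and $\bar M,\bar W>0$) and that $\sum_j\mu_j=\sum_j\omega_j=1$. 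Hence $\norm{\mu}\le\norm{\mu}_1=1$, and likewise $\norm{\omega}\le1$. So $r_0\norm{\mu}\le r_0\le\theta$, and the same bound holds for $\omega$; this is exactly where the hypothesis $r_0\le\theta$ is used.
\item The GDP constraint reads $-r_0\norm{\sigma}\ge\eta=(f^*-\varepsilon)/\bar V-1$. This is equivalent to $\varepsilon\ge f^*-\bar V+r_0\bar V\norm{\sigma}$. Since $\bar V\sigma_j=v_j\bar x_j$, we have $\bar V\norm{\sigma}=\norm{v\circ\bar x}$, which gives precisely the threshold $\varepsilon_{\rm obs}(\delta)$.
\end{itemize}

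The only step needing care is the resource constraint, where the bound $\norm{\mu}\le1$ relies on the nonnegativity of the shares. The GDP constraint does not need such a bound, since it only involves $\norm{\sigma}$, which appears directly in the threshold; so the sign of the $v_j$ is irrelevant there. Everything else is a routine substitution.
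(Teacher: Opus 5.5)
Your proposal is correct and follows the paper's proof essentially step by step. You obtain the identity by expansion, (a) from the normal vector of norm $\rho_i$, and (b) by adding the paired balance and corridor constraints. For (c) you show that $(\zeta^0,r)=(0,r_0)$ is feasible, using $\norm{\mu},\norm{\omega}\le1$ and rewriting the GDP constraint as $\varepsilon\ge\varepsilon_{\rm obs}(\delta)$. Your extra check that $\rho_i\ge1-b_{ii}>0$, so the slab in (a) is well defined, is a small detail the paper leaves implicit.
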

\begin{proof}
The equality $\rho_i^2=1-2b_{ii}+H_i$ follows from \eqref{eq:rho}.

(a) The set $\{\zeta\mid|\zeta_i-\sum_{j=1}^nb_{ij}\zeta_j-s_i\zeta_{n+i}|\le\delta\}$ is a slab which is centered at $\zeta=0$, and the norm of its normal vector is $\rho_i$. Hence its half-width is $\delta/\rho_i$.

(b) Adding constraints \eqref{eq:IOSP_sur} and \eqref{eq:IOSP_sho}, we obtain $2\rho_ir\le2\delta$, $i=\overline{1,n}$, and constraints \eqref{eq:IOSP_corr} give $2r\le2\kappa$.

(c) Consider the point $\zeta^0=0$ and $r=r_0(\delta)$. Constraints \eqref{eq:IOSP_sur}--\eqref{eq:IOSP_corr} hold, since $\rho_ir_0\le\delta$ and $r_0\le\kappa$. Since $\mu_j\ge0$ and $\sum_{j=1}^n\mu_j=1$, we have $\norm{\mu}\le1$ and $r_0\norm{\mu}\le\theta$; the same holds for $\omega$, so constraints \eqref{eq:IOSP_res} hold. Constraint \eqref{eq:IOSP_gdp} takes the form $-r_0\norm{\sigma}\ge\eta$, which is equivalent to $\varepsilon\ge f^*-\bar V+r_0\bar V\norm{\sigma}=\varepsilon_{\rm obs}(\delta)$, since $\bar V\norm{\sigma}=\norm{v\circ\bar x}$. Hence $(0,r_0)$ is feasible for problem \eqref{eq:IOSP_obj}--\eqref{eq:IOSP_r}, $r^*\ge r_0$, and by (b) we obtain $r^*=r_0$.
\end{proof}

Since $b_{ij}\ge0$, $s_i>0$ and $\sum_{j=1}^nb_{ij}+s_i=1$, the number $H_i$ is the Herfindahl index of the distribution of sales of sector $i$ over its intermediate and final users, and $0<H_i\le1$. Therefore, $\rho_i^2\le2(1-b_{ii})$ and $\rho_i\le\sqrt2$ for every sector. For a given share $b_{ii}$ of the own use, $\rho_i$ increases with the concentration of sales, while a large own-use share decreases $\rho_i$: since the shares $b_{ij}$, $j\ne i$, and $s_i$ sum to $1-b_{ii}$, formula \eqref{eq:rho} gives $1-b_{ii}\le\rho_i\le\sqrt2\,(1-b_{ii})$, so that $\rho_i<1$ if $b_{ii}>1-1/\sqrt2\approx0.29$. If a sector does not use its own output ($b_{ii}=0$), then $\rho_i^2=1+H_i$ and $1<\rho_i\le\sqrt2$. If sector $i$ sells a share $u$ of its output to a single user other than itself, i.e., $u=b_{ij}$ for some $j\ne i$ or $u=s_i$, then $b_{ii}\le1-u$, and \eqref{eq:rho} gives $\rho_i^2\ge(1-b_{ii})^2+u^2\ge2u^2$, i.e., $\sqrt2\,u\le\rho_i\le\sqrt2$. It means that the balance of a sector which sells almost all its output to one other sector or to final demand is among the most sensitive to independent deviations of outputs and final demands, and the most sensitive balance determines the largest attainable margin.

Assertions (b) and (c) can also be explained by Theorem~\ref{thm:sym}. The balance conditions and the corridors, which give constraints \eqref{eq:IOSP_sur}--\eqref{eq:IOSP_corr}, describe the set $\{\zeta\mid|\zeta_i-\sum_{j=1}^nb_{ij}\zeta_j-s_i\zeta_{n+i}|\le\delta,\ i=\overline{1,n},\ |\zeta_k|\le\kappa,\ k=\overline{1,2n}\}$, which is convex, compact and centrally symmetric about the observed table $\zeta=0$. By Theorem~\ref{thm:sym}, $\zeta=0$ is its Chebyshev center, and since the distances from $\zeta=0$ to its bounding hyperplanes are $\delta/\rho_i$ and $\kappa$, the radius of the largest sphere inscribed into this set is $r_0(\delta)$. Assertion (c) means that the GDP and resource constraints \eqref{eq:IOSP_gdp} and \eqref{eq:IOSP_res} do not reduce this radius if $\varepsilon\ge\varepsilon_{\rm obs}(\delta)$ and $r_0(\delta)\le\theta$.

Let $r_0(\delta)\le\theta$. Then the point $\zeta^0=0$ satisfies constraints \eqref{eq:IOSP_sur}--\eqref{eq:IOSP_res} with $r=r_0(\delta)$ (see the proof of (c)), and these constraints define a nonempty compact set, since they are linear and constraints \eqref{eq:IOSP_corr} bound every $\zeta_j^0$. Denote by $\psi(\delta)\ge0$ the maximum of $\sum_{j=1}^n\sigma_j\zeta_j^0$ over this set. By (b), $r^*(\varepsilon,\delta)=r_0(\delta)$ if and only if some point of this set satisfies constraint \eqref{eq:IOSP_gdp} with $r=r_0(\delta)$, i.e., $\psi(\delta)-r_0(\delta)\norm{\sigma}\ge\eta$. By \eqref{eq:eta}, this is equivalent to $\varepsilon\ge\bar\varepsilon(\delta)$, where
\[
\bar\varepsilon(\delta)=f^*-\bar V\bigl(1+\psi(\delta)-r_0(\delta)\norm{\sigma}\bigr)\le\varepsilon_{\rm obs}(\delta).
\]
Thus, $r^*(\varepsilon,\delta)=r_0(\delta)$ if and only if $\varepsilon\ge\bar\varepsilon(\delta)$, i.e., the function $r^*(\cdot,\delta)$ reaches the plateau $r_0(\delta)$ at $\max\{0,\bar\varepsilon(\delta)\}$, and the threshold $\bar\varepsilon(\delta)$ is found by one additional linear programming problem. Moreover, if $r^*(\varepsilon,\delta)=\delta/\rho_i$ for some $i$, then the weights $\lambda=1/2$ on constraints \eqref{eq:IOSP_sur} and \eqref{eq:IOSP_sho} of sector $i$ and $\lambda=0$ on all other constraints form a solution of the dual problem \eqref{eq:dual}. Indeed, the normalized normals of these two constraints are opposite, and the dual value of these weights is $\delta/\rho_i=r^*$.

\subsection{Algorithm}

The proposed approach is summarized in Algorithm~\ref{alg:esp}.

\begin{algorithm}[H]
\caption{The $(\varepsilon,\delta)$-sphere packing algorithm for input--output analysis}
\label{alg:esp}
\begin{algorithmic}
\State \textbf{Input.} The observed input--output table (intermediate deliveries, gross outputs, final demands, value added, imported inputs and compensation of employees by sectors), the corridor half-width $\kappa$, the resource headroom $\theta$ and the tolerances $\varepsilon$ and $\delta$.
\State \textbf{Step 1.} Compute $A$, $v$, $m$ and $w$ from the input--output table and check that the column sums of $A$ are less than one.
\State \textbf{Step 2.} Solve the classical problem \eqref{eq:IO_obj}--\eqref{eq:IO_res} and find $f^*$, the optimal plan $\xi^*$ and the shadow prices.
\State \textbf{Step 3.} For given $\varepsilon$ and $\delta$, compute the coefficients \eqref{eq:rho}--\eqref{eq:eta} of problem \eqref{eq:IOSP_obj}--\eqref{eq:IOSP_r}.
\State \textbf{Step 4.} Solve problem \eqref{eq:IOSP_obj}--\eqref{eq:IOSP_r} and find the radius $r^*$ and the criticality weights $\lambda^*$.
\State \textbf{Step 5.} Among all centers of balls with the radius $r^*$, find the centers with the maximum GDP, i.e., maximize $\sum_{j=1}^n\sigma_j\zeta_j^0$ subject to constraints \eqref{eq:IOSP_gdp}--\eqref{eq:IOSP_res} with $r=r^*$. Among these centers, choose the center $\zeta^0$ closest to the observed table, i.e., minimize $\norm{\zeta^0}^2$.
\State \textbf{Step 6.} Compute the robust plan $x_j^0=\bar x_j(1+\zeta_j^0)$, $y_j^0=\bar y_j(1+\zeta_{n+j}^0)$, $j=\overline{1,n}$, the intervals \eqref{eq:IOintervals} and the largest cube.
\State \textbf{Step 7.} Check the robustness of $\xi^0$ and $\xi^*$ by Monte Carlo simulation.
\end{algorithmic}
\end{algorithm}

Steps 2 and 4 and the first part of Step 5 of Algorithm~\ref{alg:esp} require linear programming problems with at most $2n+1$ variables and at most $6n+3$ linear constraints besides $r\ge0$, and the second part of Step 5 is a strictly convex quadratic programming problem with the same constraints and GDP fixed at its maximum. If the ball with the maximum radius is not unique, Step~5 chooses its center with the maximum GDP, and if there are several such centers, the one closest to the observed table. These centers form a nonempty closed convex set, and the point of such a set closest to $\zeta=0$ is unique, so the robust plan is well defined. The parameters $\kappa$, $\theta$, $\varepsilon$ and $\delta$ are not estimated by the algorithm; they describe the planning scenario, and their choice is discussed in Section~\ref{sec:data}. The computations have been carried out in R (version \rVersion) with the packages \texttt{readxl}, \texttt{highs} (version \highsVersion, an interface to the HiGHS solver), and \texttt{quadprog} (version \quadprogVersion, for the quadratic problem of Step~5).

The main difference between the classical LP--IO analysis and the proposed sphere packing approach is the following. In the classical approach, a single optimal plan is found by problem \eqref{eq:IO_obj}--\eqref{eq:IO_res}, and the observed economy is compared with this plan \cite{tenRaaMohnen2002,LuptacikBohm2010}. By Proposition~\ref{prop:fragile}, such a plan lies on the boundary of the set of acceptable plans, and an arbitrarily small deviation of the plan in a certain direction violates a constraint. Unlike the classical approach, in the proposed sphere packing approach, we define the set $D_{IO}(\varepsilon,\delta)$ of acceptable plans and find a sphere with the maximum radius inscribed into this set by solving problem \eqref{eq:IOSP_obj}--\eqref{eq:IOSP_r}. Thus, the center of the sphere gives a robust plan, and its radius gives the admissible simultaneous deviations of all outputs and final demands. On the other hand, if $\varepsilon=0$ and $\delta=0$, then $D_{IO}(\varepsilon,\delta)$ is the set of solutions of the classical problem, and the radius is zero by Theorem~\ref{thm:pos}. In this sense, the classical LP--IO analysis is a limiting case of the proposed approach.

\section{Numerical experiments}\label{sec:numerical}

\subsection{Data and parameters}\label{sec:data}

In order to illustrate the proposed sphere packing approach numerically, we use the official symmetric input--output tables of Mongolia. The tables are of the non-competitive import type, are given in million tugriks (MNT) for 2010--2019 and 2024, and consist of 20 sectors. The residual sector ``other activities'' has practically no intermediate transactions and a negligible output (3.9 bn MNT in 2024, all of it delivered to final demand). Therefore, we merge it into ``other service activities'' and obtain $n=\nSectors$ sectors and $2n=\NDim$ variables. Note that in the English version of the table the seventeenth sector is labeled ``activities of households as employers'', whereas the Mongolian version shows that it is human health and social work; we use the latter label. Final demand is computed as $\bar y=\bar x-Z\mathbf 1$, which differs from the sum of final consumption, gross capital formation and exports by at most \maxBalErr\ million MNT. The column sums of $A$ do not exceed \colSumMax\ in all years, so the Leontief inverse exists and is nonnegative (Section~\ref{sec:iomodel}); the spectral radius of $A$ does not exceed \specRadiusMax\ (\specRadius\ in 2024).

In the numerical experiments, we use the following baseline parameters: $\kappa=\kappaPct\%$, $\theta=\thetaPct\%$, the balance tolerance $\delta=\deltaPct\%$ of the output of each sector on both sides of the balance, which can be interpreted as the admissible change in inventories, and the GDP tolerance $\varepsilon=\epsPct\%$ of $f^*$. In the sensitivity analysis, $\kappa$ varies from 5\% to 20\%, $\theta$ from 2\% to 10\%, $\delta$ from 0.5\% to 5\% and $\varepsilon$ from 0 to 5\%. These parameters are not estimated from the data. The published tables do not separate changes in inventories from gross capital formation, so the balance tolerance $\delta$ cannot be calibrated directly on observed inventories. Similarly, the corridors and the resource headroom describe the scenario of a planning agency rather than technological limits. For this reason, all four parameters are treated as scenario parameters, and the effect of each of them on the results is reported in Sections~\ref{sec:tradeoff} and~\ref{sec:sensitivity}.

\subsection{Structure of the Mongolian economy}\label{sec:structure}

Table~\ref{tab:structure} presents the structure of the Mongolian economy in 2024. The gross output was \outputTrn\ trillion MNT, final demand \finalTrn\ trillion MNT and GDP \gdpObsTrn\ trillion MNT. Mining produced \miningShareOutput\% of gross output and exported \miningExportShare\% of its output; its final-demand share $s_i$ and sales-concentration index $\rho_i$ are among the highest in the economy. The index $\rho_i$ is maximal for \rhoMaxSector\ ($\rho=\rhoMax$), which sells \rhoMaxShare\% of its output to final demand, followed by \rhoSecond, \rhoThird\ and \rhoFourth. \rhoMinSector\ has the smallest index ($\rho=\rhoMin$), since \selfUseMin\% of its output is used within the sector itself and the rest is distributed over many industries.

\begin{table}[H]
\caption{Structure of the Mongolian economy in 2024.}
\label{tab:structure}
\centering\small

\begin{tabular}{lrrrrrrr}
\toprule
Sector & $\bar x_i$ (trn MNT) & $s_i$ & $v_i$ & $m_i$ & $w_i$ & $H_i$ & $\rho_i$ \\
\midrule
Agriculture & 8.58 & 0.420 & 0.505 & 0.140 & 0.063 & 0.373 & 1.058 \\
Mining & 47.88 & 0.940 & 0.459 & 0.204 & 0.077 & 0.886 & 1.344 \\
Manufacturing & 17.60 & 0.691 & 0.330 & 0.129 & 0.083 & 0.491 & 1.155 \\
Electricity & 5.64 & 0.195 & 0.210 & 0.196 & 0.123 & 0.176 & 0.742 \\
Water supply & 0.79 & 0.277 & 0.531 & 0.131 & 0.388 & 0.121 & 1.002 \\
Construction & 15.28 & 0.604 & 0.212 & 0.321 & 0.096 & 0.392 & 1.060 \\
Trade & 20.51 & 0.437 & 0.483 & 0.194 & 0.164 & 0.236 & 1.078 \\
Transport & 13.35 & 0.318 & 0.346 & 0.265 & 0.145 & 0.200 & 0.922 \\
Accommodation & 3.15 & 0.754 & 0.355 & 0.125 & 0.171 & 0.589 & 1.255 \\
ICT & 3.88 & 0.400 & 0.424 & 0.210 & 0.176 & 0.192 & 1.005 \\
Finance & 6.78 & 0.151 & 0.632 & 0.130 & 0.184 & 0.261 & 1.055 \\
Real estate & 6.17 & 0.782 & 0.730 & 0.033 & 0.027 & 0.619 & 1.270 \\
Professional & 3.33 & 0.350 & 0.398 & 0.214 & 0.232 & 0.164 & 1.008 \\
Administrative & 2.71 & 0.349 & 0.255 & 0.240 & 0.147 & 0.165 & 1.024 \\
Public administration & 7.04 & 0.932 & 0.567 & 0.135 & 0.416 & 0.873 & 1.318 \\
Education & 4.93 & 0.969 & 0.738 & 0.046 & 0.643 & 0.939 & 1.393 \\
Health & 3.38 & 0.954 & 0.574 & 0.182 & 0.460 & 0.911 & 1.381 \\
Arts & 0.71 & 0.963 & 0.619 & 0.100 & 0.427 & 0.928 & 1.388 \\
Other services & 1.54 & 0.849 & 0.409 & 0.181 & 0.169 & 0.725 & 1.312 \\
\midrule Total / economy & 173.26 & 0.652 & 0.437 & 0.189 & 0.147 &  &  \\
\bottomrule
\end{tabular}
\\[1mm]
\parbox[b]{0.95\linewidth}{\footnotesize Notes: $\bar x_i$ is gross output; $s_i=\bar y_i/\bar x_i$ is the final-demand share; $v_i$, $m_i$ and $w_i$ are the value added, imported input and compensation coefficients; $H_i$ is the Herfindahl index of sales and $\rho_i$ is the sales-concentration index \eqref{eq:rho}. Source: NSO table DT\_NSO\_3100\_003V9; authors' calculations in R.}
\end{table}

\subsection{Solution of the classical problem}\label{sec:classical}

Problem \eqref{eq:IO_obj}--\eqref{eq:IO_res} has been solved numerically in R. The maximal GDP is $f^*=\fstarTrn$ trillion MNT, while the observed GDP is \gdpObsTrn\ trillion MNT. It means that the observed economy attains \effRatio\ of its LP potential, i.e., the maximal GDP exceeds the observed one by \gdpGap\%. At the optimal solution, both resource constraints are active: an additional MNT of imported inputs increases GDP by \dualImports\ MNT, and an additional MNT of wage bill by \dualWages\ MNT. In the optimal plan, \nYup\ final demands are at their upper bounds and \nYlow\ at their lower bounds, while no gross output reaches its corridor bound; the closest to its bound is the output of \xNearSector\ with the deviation \xNearPct\% (Table~\ref{tab:plans}). Final demand is raised to its upper bound in \yUpList, and lowered to its lower bound in \yLowList.

According to Proposition~\ref{prop:fragile}, the classical optimal solution has zero robustness: the largest ball centered at it has zero radius (the computed value is \rAtLP). On the other hand, the observed table is not an $(\varepsilon,\delta)$-solution for the baseline tolerances, since its GDP is \epsObs\% smaller than $f^*$.

\begin{table}[H]
\caption{Classical optimal plan $(x^*,y^*)$ and robust $(\varepsilon,\delta)$-plan $(x^0,y^0)$ in 2024, in \% deviation from the observed table ($\varepsilon=\epsPct\%$, $\delta=\deltaPct\%$).}
\label{tab:plans}
\centering\small

\begin{tabular}{lrrrr}
\toprule
Sector & $x^*_i$ & $y^*_i$ & $x^0_i$ & $y^0_i$ \\
\midrule
Agriculture & 9.03 & 10.00$^{\uparrow}$ & 8.88 & 9.33 \\
Mining & 9.87 & 10.00$^{\uparrow}$ & 9.30 & 9.33 \\
Manufacturing & 8.34 & 10.00$^{\uparrow}$ & 8.02 & 9.33 \\
Electricity & 2.11 & -10.00$^{\downarrow}$ & 1.33 & -9.33 \\
Water supply & 1.25 & -10.00$^{\downarrow}$ & 0.92 & -9.33 \\
Construction & -4.72 & -10.00$^{\downarrow}$ & -4.69 & -9.33 \\
Trade & 7.20 & 10.00$^{\uparrow}$ & 7.04 & 9.33 \\
Transport & 0.72 & -9.23 & 0.89 & -7.13 \\
Accommodation & 9.47 & 10.00$^{\uparrow}$ & 9.02 & 9.33 \\
ICT & 7.26 & 10.00$^{\uparrow}$ & 7.20 & 9.33 \\
Finance & 8.11 & 10.00$^{\uparrow}$ & 7.97 & 9.33 \\
Real estate & 9.26 & 10.00$^{\uparrow}$ & 8.82 & 9.33 \\
Professional & -0.47 & -10.00$^{\downarrow}$ & -0.69 & -9.33 \\
Administrative & 0.43 & -10.00$^{\downarrow}$ & 0.19 & -9.33 \\
Public administration & 10.00 & 10.00$^{\uparrow}$ & 9.33 & 9.21 \\
Education & 3.63 & 3.53 & 3.81 & 3.66 \\
Health & -9.14 & -10.00$^{\downarrow}$ & -8.59 & -9.33 \\
Arts & 9.69 & 10.00$^{\uparrow}$ & 9.10 & 9.33 \\
Other services & 9.06 & 10.00$^{\uparrow}$ & 8.59 & 9.33 \\
\midrule GDP (\% of $f^*$) & 100.00 &  & 99.73 &  \\
\bottomrule
\end{tabular}
\\[1mm]
\parbox[b]{0.95\linewidth}{\footnotesize Notes: $^{\uparrow}$ and $^{\downarrow}$ mark variables at their upper and lower corridor bounds ($\pm\kappaPct\%$) in the classical optimal plan. Every plan within the Euclidean distance $r^*=\rBase\%$ of the robust plan (in relative deviations) is an $(\varepsilon,\delta)$-solution.}
\end{table}

\subsection{Robust \texorpdfstring{$(\varepsilon,\delta)$}{(epsilon,delta)}-plan}\label{sec:robust}

Problem \eqref{eq:IOSP_obj}--\eqref{eq:IOSP_r} has been solved numerically in R. For the baseline parameters, $\eta=\etaBase\%$, i.e., every plan of the sphere must have GDP at least \etaBase\% larger than the observed GDP. The optimal radius is $r^*=\rBase\%$ in the \NDim-dimensional space of relative deviations. It means that every plan whose vector of relative deviations from the robust plan has the Euclidean norm at most \rBase\% is an $(\varepsilon,\delta)$-solution. Since $r^*/\sqrt{2n}=\boxBase\%$, intervals \eqref{eq:IOintervals} take the form
\begin{equation}\label{eq:robustsys}
(1+\zeta_j^0-\boxBase\%)\,\bar x_j\le x_j\le(1+\zeta_j^0+\boxBase\%)\,\bar x_j,\qquad (1+\zeta_{n+j}^0-\boxBase\%)\,\bar y_j\le y_j\le(1+\zeta_{n+j}^0+\boxBase\%)\,\bar y_j,
\end{equation}
$j=\overline{1,n}$, where the relative deviations $100\zeta_j^0$ and $100\zeta_{n+j}^0$ of the robust plan are given in the columns $x_j^0$ and $y_j^0$ of Table~\ref{tab:plans}. It means that every output and every final demand may deviate from the robust plan independently within $\pm\boxBase\%$ of its observed level. The half-width of the largest inscribed cube is \rCube\%. The robust plan keeps \gdpCentre\% of the maximal GDP, and its relative distance is \distCentreLP\ from the classical optimal plan and \distCentreObs\ from the observed table. Note that the center with the maximum GDP is not unique: the final demands of \centreFreeList\ can be changed within ranges of at most \centreFreeMax\ percentage points without changing the radius and GDP, and Step~5 of Algorithm~\ref{alg:esp} chooses the center closest to the observed table. All other components of the robust plan are the same for every center with the maximum GDP.

Table~\ref{tab:plans} and Figure~\ref{fig:plans} compare the robust plan with the classical optimal plan. Final demands which are at the corridor bounds in the classical solution are moved inside approximately by $r^*$, and outputs are adjusted so that every balance keeps a margin. The largest output adjustments, between \adjMin\ and \adjMax\ percentage points, occur in \adjTopList. Thus, the robust plan gives up less than 0.3\% of GDP in exchange for a guaranteed margin.

\begin{figure}[H]
\centering

\begin{tikzpicture}
\begin{groupplot}[paperfig, group style={group size=2 by 1, horizontal sep=0.5cm, yticklabels at=edge left},
  width=0.31\textwidth, height=0.46\textwidth, xmin=-11, xmax=11, xtick={-10,-5,0,5,10},
  ymin=0.4, ymax=19.6, ytick={1,...,19}, yticklabels={{Arts},{Water supply},{Other services},{Administrative},{Accommodation},{Professional},{Health},{ICT},{Education},{Electricity},{Real estate},{Finance},{Public administration},{Agriculture},{Transport},{Construction},{Manufacturing},{Trade},{Mining}}]
\nextgroupplot[title={Gross output}, legend to name=figThreeLegend, legend columns=2,
  legend style={draw=none, font=\scriptsize, text=figInkTwo,
    /tikz/every even column/.append style={column sep=12pt}}]
\draw[figBase, line width=0.4pt] (-10,0.4) -- (-10,19.6) (10,0.4) -- (10,19.6);
\draw[figInkTwo, line width=0.4pt] (0,0.4) -- (0,19.6);
\addplot[only marks, figOrange, figdot] coordinates {(9.688339,1.000000) (1.247756,2.000000) (9.059496,3.000000) (0.433445,4.000000) (9.466792,5.000000) (-0.465992,6.000000)
  (-9.138522,7.000000) (7.261243,8.000000) (3.632244,9.000000) (2.105918,10.000000) (9.257391,11.000000) (8.106621,12.000000)
  (9.997404,13.000000) (9.031344,14.000000) (0.722141,15.000000) (-4.718990,16.000000) (8.344628,17.000000) (7.201875,18.000000)
  (9.870231,19.000000)};
\addlegendentry{classical LP optimum}
\addplot[only marks, figBlue, figdot] coordinates {(9.104361,1.000000) (0.915931,2.000000) (8.591535,3.000000) (0.186164,4.000000) (9.019319,5.000000) (-0.688422,6.000000)
  (-8.590890,7.000000) (7.199042,8.000000) (3.810087,9.000000) (1.327360,10.000000) (8.819162,11.000000) (7.968700,12.000000)
  (9.327427,13.000000) (8.875927,14.000000) (0.888569,15.000000) (-4.692084,16.000000) (8.024336,17.000000) (7.038362,18.000000)
  (9.300366,19.000000)};
\addlegendentry{robust $(\varepsilon,\delta)$-plan}
\nextgroupplot[title={Final demand}]
\draw[figBase, line width=0.4pt] (-10,0.4) -- (-10,19.6) (10,0.4) -- (10,19.6);
\draw[figInkTwo, line width=0.4pt] (0,0.4) -- (0,19.6);
\addplot[only marks, figOrange, figdot] coordinates {(10.000000,1.000000) (-10.000000,2.000000) (10.000000,3.000000) (-10.000000,4.000000) (10.000000,5.000000) (-10.000000,6.000000)
  (-10.000000,7.000000) (10.000000,8.000000) (3.528093,9.000000) (-10.000000,10.000000) (10.000000,11.000000) (10.000000,12.000000)
  (10.000000,13.000000) (10.000000,14.000000) (-9.234096,15.000000) (-10.000000,16.000000) (10.000000,17.000000) (10.000000,18.000000)
  (10.000000,19.000000)};
\addplot[only marks, figBlue, figdot] coordinates {(9.327427,1.000000) (-9.327348,2.000000) (9.327427,3.000000) (-9.327427,4.000000) (9.327427,5.000000) (-9.327427,6.000000)
  (-9.327427,7.000000) (9.327427,8.000000) (3.659365,9.000000) (-9.327427,10.000000) (9.327427,11.000000) (9.327427,12.000000)
  (9.207657,13.000000) (9.327427,14.000000) (-7.134583,15.000000) (-9.327427,16.000000) (9.327427,17.000000) (9.327427,18.000000)
  (9.327427,19.000000)};
\end{groupplot}
\node[anchor=south] at ($(group c1r1.north west)!0.5!(group c2r1.north east)+(0,16pt)$)
  {\pgfplotslegendfromname{figThreeLegend}};
\node[anchor=north, font=\footnotesize, text=figInkTwo]
  at ($(group c1r1.south west)!0.5!(group c2r1.south east)+(0,-14pt)$)
  {deviation from the observed 2024 table (\%)};
\end{tikzpicture}
\caption{Classical optimal plan and robust $(\varepsilon,\delta)$-plan by sectors, in \% deviation from the observed 2024 table. Vertical lines show the corridor bounds $\pm\kappaPct\%$; sectors are ordered by gross output.}
\label{fig:plans}
\end{figure}

\subsection{Robustness--optimality trade-off}\label{sec:tradeoff}

Table~\ref{tab:grid} and Figure~\ref{fig:tradeoff} show the optimal radius $r^*$ as a function of $\varepsilon$ and $\delta$. In accordance with Theorem~\ref{thm:concave}, every curve is non--decreasing, concave and piecewise linear. We distinguish the following two cases.

\emph{Case 1} (optimality-constrained regime). For small $\varepsilon$, the radius increases almost linearly with the slope of about \slopeEps\ at $\delta=\deltaPct\%$, since the ball must stay above the GDP constraint near the optimal vertex. Even for $\varepsilon=0$, the radius is positive (\rEpsZero\% at $\delta=\deltaPct\%$), because the balance tolerance admits plans with GDP larger than $f^*$ (see Remark~\ref{rem:eps0}).

\emph{Case 2} (balance-constrained regime). For $\varepsilon\ge\bar\varepsilon(\delta)$, the radius equals the plateau $\delta/\rho_{\max}$ of Proposition~\ref{prop:rho}, which is determined by the balance of \rhoMaxSector. The threshold is almost proportional to the balance tolerance, $\bar\varepsilon(\delta)\approx\kinkRatio\,\delta$; for instance, $\bar\varepsilon=\epsKinkBase\%$ for $\delta=\deltaPct\%$.

The sufficient condition of Proposition~\ref{prop:rho}(c) holds for $\varepsilon\ge\epsObsPlateau\%$, and then the observed table itself is a center of a ball with the maximum radius. However, the plateau is reached much earlier, near the optimum. It means that a GDP tolerance of about 0.6 of the balance tolerance is sufficient to obtain the largest margin which the balance tolerance allows, whereas a GDP tolerance of one half of the balance tolerance is not (Table~\ref{tab:grid}).

\begin{table}[H]
\caption{Optimal radius $r^*(\varepsilon,\delta)$ in \% of relative deviation (2024).}
\label{tab:grid}
\centering\small

\begin{tabular}{lrrrr}
\toprule
$\varepsilon$ (\% of $f^*$) & $\delta=0.5\%$ & $\delta=1.0\%$ & $\delta=2.0\%$ & $\delta=5.0\%$ \\
\midrule
0.00\% & 0.131 & 0.235 & 0.376 & 0.539 \\
0.10\% & 0.217 & 0.328 & 0.475 & 0.647 \\
0.25\% & 0.336 & 0.461 & 0.621 & 0.807 \\
0.50\% & 0.359 & 0.673 & 0.863 & 1.073 \\
1.00\% & 0.359 & 0.718 & 1.333 & 1.606 \\
2.00\% & 0.359 & 0.718 & 1.436 & 2.672 \\
5.00\% & 0.359 & 0.718 & 1.436 & 3.590 \\
\midrule $\delta/\rho_{\max}$ & 0.359 & 0.718 & 1.436 & 3.590 \\
$\bar\varepsilon(\delta)$ (\% of $f^*$) & 0.28 & 0.56 & 1.12 & 2.89 \\
\bottomrule
\end{tabular}
\\[1mm]
\parbox[b]{0.9\linewidth}{\footnotesize Notes: $\delta$ is the balance tolerance in \% of sectoral output and $\varepsilon$ is the GDP tolerance in \% of $f^*$. The row $\delta/\rho_{\max}$ gives the plateau of Proposition~\ref{prop:rho}, and $\bar\varepsilon(\delta)$ is the smallest $\varepsilon$ at which the plateau is reached.}
\end{table}

\begin{figure}[H]
\centering

\def\figTwoLabA{0.5}
\def\figTwoLabB{1}
\def\figTwoLabC{2}
\def\figTwoLabD{5}
\def\figTwoBaseX{0.500000}
\def\figTwoBaseY{0.672573}

\begin{tikzpicture}
\begin{axis}[paperfig, width=0.62\textwidth, height=0.32\textwidth, xmin=0, xmax=3.55, ymin=0,
  xtick={0,1,2,3}, xlabel={GDP tolerance $\varepsilon$ (\% of $f^*$)}, ylabel={optimal radius $r^*$ (\%)}]
\addplot[rampA, figline] coordinates {(0.000000,0.131257) (0.050000,0.174212) (0.100000,0.217162) (0.150000,0.260111) (0.200000,0.299636) (0.250000,0.336287)
  (0.300000,0.359034) (0.350000,0.359034) (0.400000,0.359034) (0.450000,0.359034) (0.500000,0.359034) (0.550000,0.359034)
  (0.600000,0.359034) (0.650000,0.359034) (0.700000,0.359034) (0.750000,0.359034) (0.800000,0.359034) (0.850000,0.359034)
  (0.900000,0.359034) (0.950000,0.359034) (1.000000,0.359034) (1.050000,0.359034) (1.100000,0.359034) (1.150000,0.359034)
  (1.200000,0.359034) (1.250000,0.359034) (1.300000,0.359034) (1.350000,0.359034) (1.400000,0.359034) (1.450000,0.359034)
  (1.500000,0.359034) (1.550000,0.359034) (1.600000,0.359034) (1.650000,0.359034) (1.700000,0.359034) (1.750000,0.359034)
  (1.800000,0.359034) (1.850000,0.359034) (1.900000,0.359034) (1.950000,0.359034) (2.000000,0.359034) (2.050000,0.359034)
  (2.100000,0.359034) (2.150000,0.359034) (2.200000,0.359034) (2.250000,0.359034) (2.300000,0.359034) (2.350000,0.359034)
  (2.400000,0.359034) (2.450000,0.359034) (2.500000,0.359034) (2.550000,0.359034) (2.600000,0.359034) (2.650000,0.359034)
  (2.700000,0.359034) (2.750000,0.359034) (2.800000,0.359034) (2.850000,0.359034) (2.900000,0.359034) (2.950000,0.359034)
  (3.000000,0.359034)} node[pos=1, anchor=west, figlabel] {$\delta=\figTwoLabA\%$};
\addplot[rampB, figline] coordinates {(0.000000,0.235233) (0.050000,0.282074) (0.100000,0.327642) (0.150000,0.373211) (0.200000,0.418102) (0.250000,0.461139)
  (0.300000,0.504128) (0.350000,0.547080) (0.400000,0.590030) (0.450000,0.632979) (0.500000,0.672573) (0.550000,0.709225)
  (0.600000,0.718068) (0.650000,0.718068) (0.700000,0.718068) (0.750000,0.718068) (0.800000,0.718068) (0.850000,0.718068)
  (0.900000,0.718068) (0.950000,0.718068) (1.000000,0.718068) (1.050000,0.718068) (1.100000,0.718068) (1.150000,0.718068)
  (1.200000,0.718068) (1.250000,0.718068) (1.300000,0.718068) (1.350000,0.718068) (1.400000,0.718068) (1.450000,0.718068)
  (1.500000,0.718068) (1.550000,0.718068) (1.600000,0.718068) (1.650000,0.718068) (1.700000,0.718068) (1.750000,0.718068)
  (1.800000,0.718068) (1.850000,0.718068) (1.900000,0.718068) (1.950000,0.718068) (2.000000,0.718068) (2.050000,0.718068)
  (2.100000,0.718068) (2.150000,0.718068) (2.200000,0.718068) (2.250000,0.718068) (2.300000,0.718068) (2.350000,0.718068)
  (2.400000,0.718068) (2.450000,0.718068) (2.500000,0.718068) (2.550000,0.718068) (2.600000,0.718068) (2.650000,0.718068)
  (2.700000,0.718068) (2.750000,0.718068) (2.800000,0.718068) (2.850000,0.718068) (2.900000,0.718068) (2.950000,0.718068)
  (3.000000,0.718068)} node[pos=1, anchor=west, figlabel] {$\delta=\figTwoLabB\%$};
\addplot[rampC, figline] coordinates {(0.000000,0.375862) (0.050000,0.425697) (0.100000,0.474943) (0.150000,0.523478) (0.200000,0.572013) (0.250000,0.620548)
  (0.300000,0.669083) (0.350000,0.717618) (0.400000,0.766153) (0.450000,0.814688) (0.500000,0.863223) (0.550000,0.911631)
  (0.600000,0.959911) (0.650000,1.008191) (0.700000,1.056472) (0.750000,1.104752) (0.800000,1.151402) (0.850000,1.196970)
  (0.900000,1.242539) (0.950000,1.287963) (1.000000,1.333224) (1.050000,1.378486) (1.100000,1.418449) (1.150000,1.436135)
  (1.200000,1.436135) (1.250000,1.436135) (1.300000,1.436135) (1.350000,1.436135) (1.400000,1.436135) (1.450000,1.436135)
  (1.500000,1.436135) (1.550000,1.436135) (1.600000,1.436135) (1.650000,1.436135) (1.700000,1.436135) (1.750000,1.436135)
  (1.800000,1.436135) (1.850000,1.436135) (1.900000,1.436135) (1.950000,1.436135) (2.000000,1.436135) (2.050000,1.436135)
  (2.100000,1.436135) (2.150000,1.436135) (2.200000,1.436135) (2.250000,1.436135) (2.300000,1.436135) (2.350000,1.436135)
  (2.400000,1.436135) (2.450000,1.436135) (2.500000,1.436135) (2.550000,1.436135) (2.600000,1.436135) (2.650000,1.436135)
  (2.700000,1.436135) (2.750000,1.436135) (2.800000,1.436135) (2.850000,1.436135) (2.900000,1.436135) (2.950000,1.436135)
  (3.000000,1.436135)} node[pos=1, anchor=west, figlabel] {$\delta=\figTwoLabC\%$};
\addplot[rampD, figline] coordinates {(0.000000,0.539369) (0.050000,0.593415) (0.100000,0.646720) (0.150000,0.700025) (0.200000,0.753330) (0.250000,0.806635)
  (0.300000,0.859941) (0.350000,0.913246) (0.400000,0.966551) (0.450000,1.019856) (0.500000,1.073161) (0.550000,1.126466)
  (0.600000,1.179771) (0.650000,1.233076) (0.700000,1.286381) (0.750000,1.339686) (0.800000,1.392991) (0.850000,1.446296)
  (0.900000,1.499602) (0.950000,1.552907) (1.000000,1.606212) (1.050000,1.659517) (1.100000,1.712822) (1.150000,1.766127)
  (1.200000,1.819432) (1.250000,1.872737) (1.300000,1.926042) (1.350000,1.979347) (1.400000,2.032652) (1.450000,2.085957)
  (1.500000,2.139263) (1.550000,2.192568) (1.600000,2.245873) (1.650000,2.299178) (1.700000,2.352483) (1.750000,2.405788)
  (1.800000,2.459093) (1.850000,2.512398) (1.900000,2.565703) (1.950000,2.619008) (2.000000,2.672313) (2.050000,2.725618)
  (2.100000,2.778923) (2.150000,2.832229) (2.200000,2.885534) (2.250000,2.938839) (2.300000,2.992144) (2.350000,3.044296)
  (2.400000,3.095957) (2.450000,3.147619) (2.500000,3.199263) (2.550000,3.250759) (2.600000,3.302255) (2.650000,3.353750)
  (2.700000,3.405246) (2.750000,3.455785) (2.800000,3.505557) (2.850000,3.555222) (2.900000,3.590339) (2.950000,3.590339)
  (3.000000,3.590339)} node[pos=1, anchor=west, figlabel] {$\delta=\figTwoLabD\%$};
\addplot[only marks, rampB, figdot, mark size=2pt] coordinates {(\figTwoBaseX,\figTwoBaseY)};
\end{axis}
\end{tikzpicture}
\caption{The optimal radius $r^*$ as a function of the GDP tolerance $\varepsilon$ for four balance tolerances $\delta$ (2024). The dot shows the baseline.}
\label{fig:tradeoff}
\end{figure}

\subsection{Critical constraints}\label{sec:critical}

Table~\ref{tab:duals} presents the criticality weights of Proposition~\ref{prop:dual} for three values of $\varepsilon$. In these three cases, and in all cases of Table~\ref{tab:sensitivity}, the computed solution of problem \eqref{eq:IOSP_obj}--\eqref{eq:IOSP_r} is a nondegenerate vertex with exactly $2n+1$ active constraints. Hence the dual solution is unique, and by Proposition~\ref{prop:dual} the weight $\lambda_k^*=\partial r^*/\partial\hat q_k$ is the rate at which the radius increases when the $k$-th constraint is moved outwards. For the baseline tolerances, the largest weight corresponds to the GDP constraint (\dualGDP), followed by the \dualTopTwoName\ (\dualTopTwo), the final-demand cap of mining, imports, wages and the balances of manufacturing, trade and agriculture. Near the optimum, the GDP requirement increases mining output, while mining exports are bounded by the final-demand corridor. Therefore, the margin of the plan depends on the amount of unsold mining output which can be absorbed, for example, as stockpiles. For $\varepsilon=0$, this role is played by the capacity of mining. In the balance-constrained regime ($\varepsilon=2\%$), both sides of the balance of education have the equal weights 0.5, which is the dual solution described after Proposition~\ref{prop:rho}.

Note that the robustness-critical sectors differ from the key sectors of the classical linkage analysis. According to the criterion of Rasmussen \cite{Rasmussen1956}, where both the backward and forward linkages exceed the average, the key sectors in 2024 are \rasmussenKey. None of them appears among the constraints with the largest weights in Table~\ref{tab:duals}, and every constraint of these sectors has a weight smaller than 0.005. Linkages measure how strongly shocks propagate, whereas the criticality weights show which constraints limit the margin of a near-optimal plan.

\begin{table}[H]
\caption{Criticality weights $\lambda^*$ of the active constraints (2024, $\delta=\deltaPct\%$).}
\label{tab:duals}
\centering\small

\begin{tabular}{lrrr}
\toprule
Constraint & $\varepsilon=0$ & $\varepsilon=0.5\%$ & $\varepsilon=2\%$ \\
\midrule
GDP floor & 0.326 & 0.248 & -- \\
Balance, surplus side: Education & -- & -- & 0.500 \\
Balance, shortage side: Education & -- & -- & 0.500 \\
Imports & 0.120 & 0.094 & -- \\
Wages & 0.101 & 0.078 & -- \\
Balance, surplus side: Mining & -- & 0.148 & -- \\
Output capacity: Mining & 0.139 & -- & -- \\
Final-demand cap: Mining & -- & 0.104 & -- \\
Balance, surplus side: Manufacturing & 0.041 & 0.047 & -- \\
Balance, surplus side: Trade & 0.042 & 0.033 & -- \\
Output capacity: Real estate & 0.048 & -- & -- \\
Balance, surplus side: Agriculture & -- & 0.031 & -- \\
\bottomrule
\end{tabular}
\\[1mm]
\parbox[b]{0.9\linewidth}{\footnotesize Notes: the weights are dual solutions of problem \eqref{eq:dual} and sum to one in each column (``--'' denotes zero); only constraints with one of the six largest weights in some column are shown. The GDP floor, the surplus and shortage sides of the balances, the output capacities and final-demand caps, and imports and wages correspond to constraints \eqref{eq:IOSP_gdp}, \eqref{eq:IOSP_sur}, \eqref{eq:IOSP_sho}, \eqref{eq:IOSP_corr} and \eqref{eq:IOSP_res}, respectively.}
\end{table}

\subsection{Sensitivity analysis}\label{sec:sensitivity}

Table~\ref{tab:sensitivity} shows the results for different corridor widths $\kappa$ and resource headroom $\theta$. The ratio GDP$/f^*$ varies from \effMin\ to \effMax, and the radius for $\varepsilon=\epsPct\%$ varies from \rSensMin\% to \rSensMax\%. For $\varepsilon=2\%$, the radius equals the plateau \rPlateau\% in all twelve cases. By Proposition~\ref{prop:rho}, the plateau $r_0(\delta)=\delta/\rho_{\max}$ depends only on $\delta$ and on the sales structure, since $\delta/\rho_{\max}<\kappa$ and $\delta/\rho_{\max}<\theta$ in all cases, whereas the fact that $\varepsilon=2\%$ exceeds the threshold $\bar\varepsilon(\delta)$ in every case is a result of the computation; the sufficient condition $\varepsilon\ge\varepsilon_{\rm obs}(\delta)$ of Proposition~\ref{prop:rho}(c) does not hold for $\varepsilon=2\%$.

The effect of resources is non-monotone. For $\kappa=\kappaPct\%$, an increase of $\theta$ from 5\% to 10\% raises $f^*$ by \fstarThetaTenChg\%, but decreases the baseline radius from \rThetaFive\% to \rThetaTen\%. The reason is that the optimum moves to a sharper vertex, where \activeTen\ constraints are active instead of \activeFive, including all \xBoundsTen\ output corridors.

We also examined three additional cases. If shortages are not allowed, i.e., the balance condition is replaced by $0\le S_i-D_i\le\delta\bar x_i$, then the width of every balance slab is halved. Adding the two sides of the balance of sector $i$ as in the proof of Proposition~\ref{prop:rho}(b), we obtain $2\rho_ir\le\delta$, so that the radius cannot exceed $\delta/(2\rho_{\max})=\rOneSidedTheory\%$. The computed radius is \rOneSidedBase\% for both $\varepsilon=\epsPct\%$ and $\varepsilon=2\%$, i.e., this bound is attained. If the ball is measured in monetary units instead of relative deviations, its radius is \rAbsBn\ billion MNT, and the implied simultaneous tolerance varies from $\pm\absMinPct\%$ of the base value (\absMinSector) to $\pm\absMaxPct\%$ (\absMaxSector). Such a ball is determined by the scale of the variables rather than by the structure of the economy, which confirms the necessity of the relative coordinates. Finally, the largest inscribed cube has the half-width \rCube\% per variable, while the Euclidean ball guarantees $\pm\boxBase\%$.

\begin{table}[H]
\caption{Sensitivity to the corridor width $\kappa$ and the resource headroom $\theta$ (2024, $\delta=\deltaPct\%$).}
\label{tab:sensitivity}
\centering\small

\begin{tabular}{rrrrrrl}
\toprule
$\kappa$ & $\theta$ & $f^*$ (trn MNT) & GDP$/f^*$ & $r^*$, $\varepsilon=0.5\%$ & $r^*$, $\varepsilon=2\%$ & Second critical constraint \\
\midrule
5\% & 2\% & 77.93 & 0.971 & 0.665 & 0.718 & Imports \\
5\% & 5\% & 79.48 & 0.952 & 0.386 & 0.718 & Output capacity: Mining \\
5\% & 10\% & 79.48 & 0.952 & 0.386 & 0.718 & Output capacity: Mining \\
10\% & 2\% & 78.66 & 0.962 & 0.672 & 0.718 & Imports \\
10\% & 5\% & 80.81 & 0.937 & 0.673 & 0.718 & Balance, surplus side: Mining \\
10\% & 10\% & 83.27 & 0.909 & 0.404 & 0.718 & Output capacity: Mining \\
15\% & 2\% & 79.35 & 0.954 & 0.675 & 0.718 & Imports \\
15\% & 5\% & 81.66 & 0.927 & 0.686 & 0.718 & Imports \\
15\% & 10\% & 84.72 & 0.893 & 0.689 & 0.718 & Balance, surplus side: Mining \\
20\% & 2\% & 80.04 & 0.946 & 0.679 & 0.718 & Imports \\
20\% & 5\% & 82.41 & 0.919 & 0.691 & 0.718 & Imports \\
20\% & 10\% & 85.92 & 0.881 & 0.696 & 0.718 & Balance, surplus side: Mining \\
\bottomrule
\end{tabular}
\\[1mm]
\parbox[b]{0.95\linewidth}{\footnotesize Notes: $r^*$ is given in \% of relative deviation. The last column gives the constraint with the second largest criticality weight at $\varepsilon=\epsPct\%$; the largest weight corresponds to the GDP constraint in every case.}
\end{table}

\subsection{Monte Carlo simulation}\label{sec:mc}

In order to examine the behavior of plans under random deviations, we perturb every output and final demand independently, $\xi_j\mapsto\xi_j+\bar\xi_ju_j$, where $u_j$ is uniformly distributed on $[-s,s]$. For each size $s$, \mcNsim\ perturbed plans have been generated, and the plans which remain in $D_{IO}(\varepsilon,\delta)$ have been counted (Figure~\ref{fig:mc}). Not a single perturbed classical optimal plan remains admissible, even for $s=0.01\%$. For the center of the sphere, all perturbed plans remain admissible up to $s=\mcFullSphere\%$, which is almost three times larger than the guaranteed tolerance \boxBase\%. For $s=0.4\%$ and $s=0.5\%$, the shares are \mcSphereFour\% and \mcSphereFive\%, and a half of the plans remain admissible at $s\approx\mcHalfSphere\%$. The center of the largest cube, chosen as in Step~5, is completely robust up to its half-width \rCube\%, but the share decreases to \mcCubeFour\% at $s=0.4\%$ and to \mcCubeFive\% at $s=0.5\%$. The cube is designed for the worst corner of a box, whereas the Euclidean ball keeps a larger distance to the numerous constraints involving a single variable (the corridors) and to the balance slabs. Therefore, the sphere is more robust against typical deviations caused by independent shocks.

\begin{figure}[H]
\centering

\def\figFourBox{0.109106}

\begin{tikzpicture}
\begin{axis}[paperfig, width=0.66\textwidth, height=0.30\textwidth, xmode=log, xmin=0.009, xmax=1.9,
  xtick={0.01,0.03,0.1,0.3,1}, xticklabels={0.01,0.03,0.1,0.3,1}, ymin=-4, ymax=104,
  ytick={0,25,50,75,100},
  xlabel={independent perturbation of every output and final demand, $\pm s$ (\%, log scale)},
  ylabel={plans still in $D_{IO}(\varepsilon,\delta)$ (\%)}]
\draw[figBase, line width=0.4pt] (\figFourBox,-4) -- (\figFourBox,104);
\addplot[figOrange, figline, figdot] coordinates {(0.010000,0.000000) (0.025000,0.000000) (0.050000,0.000000) (0.100000,0.000000) (0.200000,0.000000) (0.300000,0.000000)
  (0.400000,0.000000) (0.500000,0.000000) (0.600000,0.000000) (0.750000,0.000000) (1.000000,0.000000)};
\addplot[figGreen, figline, figdot] coordinates {(0.010000,100.000000) (0.025000,100.000000) (0.050000,100.000000) (0.100000,100.000000) (0.200000,100.000000) (0.300000,100.000000)
  (0.400000,16.920000) (0.500000,1.905000) (0.600000,0.285000) (0.750000,0.010000) (1.000000,0.000000)};
\addplot[figBlue, figline, figdot] coordinates {(0.010000,100.000000) (0.025000,100.000000) (0.050000,100.000000) (0.100000,100.000000) (0.200000,100.000000) (0.300000,100.000000)
  (0.400000,99.845000) (0.500000,96.455000) (0.600000,66.165000) (0.750000,8.600000) (1.000000,0.075000)};
\node[anchor=south west, figlabel] at (0.0095,4) {classical LP optimum};
\node[anchor=east, figlabel] at (0.29,50) {cube center};
\node[anchor=west, figlabel] at (0.8,50) {sphere center};
\end{axis}
\end{tikzpicture}
\caption{Share of randomly perturbed plans which remain $(\varepsilon,\delta)$-solutions (2024, $\varepsilon=\epsPct\%$, $\delta=\deltaPct\%$, \mcNsim\ plans for each size). The thin vertical line shows the guaranteed tolerance $r^*/\sqrt{2n}$ of the sphere.}
\label{fig:mc}
\end{figure}

\subsection{Results for 2010--2024}\label{sec:panel}

Table~\ref{tab:panel} presents the results for all eleven tables. The observed efficiency is stable, between \effPanelMin\ and \effPanelMax, and so is the baseline radius, between \rPanelMin\% and \rPanelMax\%. The plateau is determined by education in all years except 2010, when health had the most concentrated sales. At the same time, the structure of mining changed considerably: its export share increased from \miningExpFirst\% in 2010 to \miningExpLast\% in 2024, and its sales-concentration index changed from \rhoMiningFirst\ to \rhoMiningLast\ (Figure~\ref{fig:rho}). Mining now has one of the narrowest balance slabs, and in the optimality-constrained regime its balance is already among the critical constraints (Table~\ref{tab:duals}). This is a geometric counterpart of the concentration risk discussed in the literature on the Dutch disease in Mongolia \cite{Dagys2020,Dagys2025}.

\begin{table}[H]
\caption{Classical optimal solution and robustness, 2010--2024 ($\varepsilon=\epsPct\%$, $\delta=\deltaPct\%$, $\kappa=\kappaPct\%$, $\theta=\thetaPct\%$).}
\label{tab:panel}
\centering\small

\begin{tabular}{rrrrrrlr}
\toprule
Year & GDP (trn) & $f^*$ (trn) & GDP$/f^*$ & $r^*$ (\%) & $\delta/\rho_{\max}$ (\%) & Sector with $\rho_{\max}$ & Mining exp.\ (\%) \\
\midrule
2010 & 8.76 & 9.37 & 0.934 & 0.682 & 0.733 & Health (1.363) & 69 \\
2011 & 11.52 & 12.31 & 0.936 & 0.676 & 0.716 & Education (1.397) & 77 \\
2012 & 14.83 & 15.83 & 0.937 & 0.701 & 0.715 & Education (1.399) & 70 \\
2013 & 17.00 & 18.19 & 0.934 & 0.698 & 0.716 & Education (1.396) & 74 \\
2014 & 20.15 & 21.57 & 0.934 & 0.699 & 0.720 & Education (1.389) & 89 \\
2015 & 21.01 & 22.49 & 0.934 & 0.691 & 0.720 & Education (1.388) & 90 \\
2016 & 21.94 & 23.48 & 0.934 & 0.685 & 0.719 & Education (1.390) & 96 \\
2017 & 25.47 & 27.31 & 0.933 & 0.664 & 0.722 & Education (1.385) & 92 \\
2018 & 29.06 & 31.18 & 0.932 & 0.661 & 0.718 & Education (1.392) & 84 \\
2019 & 33.83 & 36.26 & 0.933 & 0.670 & 0.718 & Education (1.393) & 89 \\
2024 & 75.70 & 80.81 & 0.937 & 0.673 & 0.718 & Education (1.393) & 96 \\
\bottomrule
\end{tabular}
\\[1mm]
\parbox[b]{0.95\linewidth}{\footnotesize Notes: GDP and $f^*$ are given in trillion MNT at current prices. The last column is the share of mining output which is exported. The tables for 2020--2023 are not available.}
\end{table}

\begin{figure}[H]
\centering

\def\figFiveValA{1.392626}
\def\figFiveLabA{1.395714}
\def\figFiveNameA{Education}
\def\figFiveValB{1.388049}
\def\figFiveLabB{1.388214}
\def\figFiveNameB{Arts}
\def\figFiveValC{1.380714}
\def\figFiveLabC{1.380714}
\def\figFiveNameC{Health}
\def\figFiveValD{1.344015}
\def\figFiveLabD{1.344015}
\def\figFiveNameD{Mining}
\def\figFiveValE{1.317565}
\def\figFiveLabE{1.317565}
\def\figFiveNameE{Public administration}

\begin{tikzpicture}
\begin{axis}[paperfig, width=0.6\textwidth, height=0.29\textwidth, xmin=2009.5, xmax=2024.5,
  xtick={2010,2012,...,2024}, xticklabel style={/pgf/number format/1000 sep={}},
  yticklabel style={/pgf/number format/fixed, /pgf/number format/precision=3,
    /pgf/number format/fixed zerofill}, ylabel={sales-concentration index $\rho$}]
\addplot[figBlue, figline, line width=0.8pt, figdot, mark size=1.6pt] coordinates {(2010.000000,1.363133) (2011.000000,1.397411) (2012.000000,1.398823) (2013.000000,1.396143) (2014.000000,1.388985) (2015.000000,1.387941)
  (2016.000000,1.390071) (2017.000000,1.384876) (2018.000000,1.391813) (2019.000000,1.393200)};
\addplot[figOrange, figline, line width=0.8pt, figdot, mark size=1.6pt] coordinates {(2010.000000,1.295206) (2011.000000,1.278199) (2012.000000,1.344744) (2013.000000,1.279141) (2014.000000,1.277925) (2015.000000,1.333129)
  (2016.000000,1.328204) (2017.000000,1.321193) (2018.000000,1.313866) (2019.000000,1.316780)};
\addplot[figGreen, figline, line width=0.8pt, figdot, mark size=1.6pt] coordinates {(2010.000000,1.363373) (2011.000000,1.309784) (2012.000000,1.358100) (2013.000000,1.365330) (2014.000000,1.355916) (2015.000000,1.375596)
  (2016.000000,1.378559) (2017.000000,1.379362) (2018.000000,1.375000) (2019.000000,1.378909)};
\addplot[figYellow, figline, line width=0.8pt, figdot, mark size=1.6pt] coordinates {(2010.000000,1.325526) (2011.000000,1.369808) (2012.000000,1.358200) (2013.000000,1.356926) (2014.000000,1.364485) (2015.000000,1.278286)
  (2016.000000,1.299782) (2017.000000,1.320273) (2018.000000,1.291061) (2019.000000,1.299589)};
\addplot[figPink, figline, line width=0.8pt, figdot, mark size=1.6pt] coordinates {(2010.000000,1.294043) (2011.000000,1.330608) (2012.000000,1.351072) (2013.000000,1.349940) (2014.000000,1.360962) (2015.000000,1.378630)
  (2016.000000,1.383107) (2017.000000,1.376128) (2018.000000,1.368996) (2019.000000,1.378865)};
\addplot[only marks, figBlue, figdot, mark size=2.1pt] coordinates {(2024,\figFiveValA)};
\addplot[only marks, figOrange, figdot, mark size=2.1pt] coordinates {(2024,\figFiveValB)};
\addplot[only marks, figGreen, figdot, mark size=2.1pt] coordinates {(2024,\figFiveValC)};
\addplot[only marks, figYellow, figdot, mark size=2.1pt] coordinates {(2024,\figFiveValD)};
\addplot[only marks, figPink, figdot, mark size=2.1pt] coordinates {(2024,\figFiveValE)};
\draw[figMuted, line width=0.3pt] (2024.3,\figFiveValA) -- (2024.8,\figFiveLabA);
\draw[figMuted, line width=0.3pt] (2024.3,\figFiveValB) -- (2024.8,\figFiveLabB);
\draw[figMuted, line width=0.3pt] (2024.3,\figFiveValC) -- (2024.8,\figFiveLabC);
\draw[figMuted, line width=0.3pt] (2024.3,\figFiveValD) -- (2024.8,\figFiveLabD);
\draw[figMuted, line width=0.3pt] (2024.3,\figFiveValE) -- (2024.8,\figFiveLabE);
\node[anchor=west, figlabel] at (2024.9,\figFiveLabA) {\figFiveNameA};
\node[anchor=west, figlabel] at (2024.9,\figFiveLabB) {\figFiveNameB};
\node[anchor=west, figlabel] at (2024.9,\figFiveLabC) {\figFiveNameC};
\node[anchor=west, figlabel] at (2024.9,\figFiveLabD) {\figFiveNameD};
\node[anchor=west, figlabel] at (2024.9,\figFiveLabE) {\figFiveNameE};
\end{axis}
\end{tikzpicture}
\caption{Sales-concentration index $\rho_i$ of the five sectors with the most concentrated sales in 2024. The observations for 2019 and 2024 are not connected, since the tables for 2020--2023 are not available.}
\label{fig:rho}
\end{figure}

\section{Discussion}\label{sec:discussion}

The results of this paper address the three research gaps identified in Section~\ref{sec:gap}. Regarding (G1), the robust $(\varepsilon,\delta)$-plan is accompanied by tolerance ranges which are guaranteed for simultaneous deviations of all outputs and final demands ($\pm\boxBase\%$ in 2024), whereas the classical optimal plan has no margin at all. Regarding (G2), the $\delta$-relaxation makes the sphere packing approach applicable to balance equations, and Theorems~\ref{thm:lp}--\ref{thm:concave} and Propositions~\ref{prop:slab}--\ref{prop:dual} describe the optimal radius as a function of the tolerances. Regarding (G3), Proposition~\ref{prop:rho} relates the largest attainable margin to the sales concentration of sectors, and the computations show that the constraints which limit the margin differ from the key sectors of the linkage analysis. Below, we compare these results with the related literature.

The classical LP--IO models \cite{DSS1958,CheneryKretschmer1956,Manne1974} find an optimal vertex and shadow prices, and efficiency studies compare the observed economy with this frontier \cite{tenRaaMohnen2002,LuptacikBohm2010}. Our classical solution confirms this view: the observed economy of Mongolia attains about 93--94\% of its LP potential in 2010--2019 and 2024. The sphere packing approach gives an additional object, namely, a robust plan in the interior of the set of near-optimal plans together with a guaranteed margin, and the Monte Carlo simulation shows that the difference between the two plans is important in practice. While the LP models in \cite{PichlerFarmer2022} examine the boundary of the feasible set of an IO economy under shocks, the proposed approach describes its interior.

The tolerance approach \cite{Wendell1985,Filippi2005,Borgonovo2018} and robust optimization \cite{Soyster1973,BenTalNemirovski2000,BertsimasSim2004,Kang2020EJOR} protect the optimal basis or the constraints against perturbations of the data. In contrast, the $(\varepsilon,\delta)$-sphere protects the decision itself and shows how far the plan may deviate. These two views complement each other. Moreover, the criticality weights are dual variables of a linear programming problem, and since the dual solutions are unique in our computations (Section~\ref{sec:critical}), they have the usual sensitivity interpretation (Proposition~\ref{prop:dual}).

In \cite{EnkhbatTungalag2023,Chuluunbaatar2023,Ganlkhagva2023,Enkhbat2021NACO}, spheres have been inscribed into sets defined by inequalities of a single decision maker, such as the profitability conditions of a company, financial ratios, a budget surplus and a regression model of a mineral process. The balance equations of an interdependent economy require the $\delta$-relaxation (Proposition~\ref{prop:empty} and Theorem~\ref{thm:pos}), while the $\varepsilon$-relaxation connects the approach with the near-optimal spaces \cite{Brill1982,NeumannBrown2021}. Our results also show that the sphere is more robust against random independent deviations than the cube of the survival theory \cite{Ashchepkov1980,Badam2021}, although the cube gives a larger worst-case tolerance per variable.

From the economic point of view, the balance tolerance, that is, the capacity to hold inventories, makes robust planning possible, and it increases the robustness at the rate $1/\rho_{\max}$. A GDP tolerance of about 0.6 of the balance tolerance is sufficient to use it completely. The largest margin is limited by sectors with concentrated sales: public services, which sell almost exclusively to final consumption, and, increasingly, the export-oriented mining sector. Therefore, a diversification of the users of mining output, for example, by domestic processing, would not only reduce the exposure of the economy \cite{Dagys2025} but also widen the balance slab which restricts robust plans. Near the optimum, the robustness is limited by imports and labor, whose shadow prices \dualImports\ and \dualWages\ are considerable.

In practice, a planning agency can present the robust plan together with its tolerance ranges ($\pm\boxBase\%$ guaranteed, and $\pm\mcFullSphere\%$ without a single violation in the simulation) instead of a single optimal plan without any margin. Then sectoral ministries can set their targets within these ranges, and the criticality weights show which constraints should be relaxed first.

The proposed approach has some limitations. The analysis is based on a static Leontief model with fixed coefficients and on tables in current prices. The corridors, the resource headroom and the tolerances are scenario parameters; therefore, we reported the sensitivity of the results instead of claiming particular values. The Euclidean ball treats all directions of deviations equally, and the guaranteed interval $\pm r^*/\sqrt{2n}$ decreases with the number of variables; for this reason, we also reported the cube and the Monte Carlo results. Finally, the analysis considers quantities and does not model prices.

\section{Conclusions}\label{sec:conclusion}

We have introduced the notion of $(\varepsilon,\delta)$-solutions of a linear programming problem with equality constraints and extended input--output analysis with linear programming by the sphere packing approach. Since the balance equations of the input--output model have no interior, the classical problem $\max\ip{c}{x}$, $Ax=b$, has been relaxed into the $(\varepsilon,\delta)$-set, in which GDP may be smaller than its optimal value by $\varepsilon$ and supply may differ from demand by $\delta$, and a sphere with the maximum radius has been inscribed into this set. We have shown that the obtained problem reduces to linear programming and that its optimal radius is positive for positive tolerances, non--decreasing, concave and piecewise linear with respect to the tolerances, and bounded by the narrowest balance constraint. In the relative coordinates, this bound equals $\delta/\rho_{\max}$, where $\rho_i$ is the sales-concentration index, and if it does not exceed the corridor half-width and the resource headroom, it is attained for all GDP tolerances not smaller than a threshold, which is found by one additional linear programming problem. Every classical optimal solution has zero robustness, and the criticality weights identify the constraints which limit the robustness.

The advantage of the sphere packing approach over the classical LP--IO analysis is that it finds a set of production plans which ensure the balance of the economy and a given level of GDP. Instead of a single optimal plan without any margin, a planning agency obtains a plan together with the ranges of outputs and final demands, given by the center and the radius of the sphere, within which both requirements are guaranteed.

For the Mongolian economy, the robust plan costs less than 0.3\% of the maximal GDP, while the classical optimal plan tolerates no deviation at all. Near the optimum, the robustness is limited by the GDP requirement, mining, imports and wages, and in the balance-constrained regime by the concentrated sales of public services. The efficiency, the radius and the plateau are stable over all available years, and the plateau is the same for all considered corridors and resource headrooms.

The results have been obtained for a static Leontief model with fixed coefficients and for tables in current prices. The tolerances, the corridors and the resource headroom are scenario parameters, and the Euclidean ball treats all directions of deviations equally. In further research, the tolerances and corridors can be estimated from observed data on inventories and capacity utilization, and the analysis can be repeated with constant-price tables. The ball can be replaced by ellipsoids calibrated on observed shocks. Moreover, employment, capital and environmental constraints can be added, and the approach can be applied to regional and multiregional input--output tables and to dynamic input--output models with investment.

\section*{Acknowledgments}
The authors thank the University of Finance and Economics, Ulaanbaatar, Mongolia.

\section*{Author contributions}
Munkhbayar Byambadash: Conceptualization, Writing--review and editing. Tumendelger Lkhagvasuren: Formal analysis, Data curation, Visualization, Writing--original draft. Enkhbat Rentsen: Methodology, Supervision.

\section*{Funding}
This research was supported by the University of Finance and Economics, Ulaanbaatar, Mongolia.

\section*{Data availability}
The input--output tables are publicly available from the database of the National Statistics Office of Mongolia. The R code that reproduces all tables, figures and numbers is available from the corresponding author on request.

\section*{Conflict of interest}
The authors declare no conflict of interest.

\section*{Use of Generative-AI tools declaration}
During the preparation of this manuscript, the authors used Claude (Anthropic) to check the numerical results, to write parts of the R code, and for typesetting and language editing. All AI-assisted content was critically reviewed and verified by the authors, who take full responsibility for the content of the paper.

\end{document}